\documentclass[11pt,pdftex]{article}

\usepackage{amssymb}
\usepackage{amsmath}
\usepackage{epigraph}
\usepackage{paralist}
\usepackage{url}
\usepackage{tikz}
\usepackage{color} 
\usepackage{fullpage}
\usepackage{graphics}

\usepackage{microtype}

\usepackage{wrapfig}
\usepackage[mode=buildnew]{standalone}
\usepackage{tikz}
\usepackage{enumerate}
\usepackage[boxed, ruled,noend,linesnumbered]{algorithm2e}

\usepackage{caption}
\usepackage{subcaption}
\usepackage{url}



\newtheorem{theorem}{Theorem}
\newtheorem{corollary}[theorem]{Corollary}
\newtheorem{definition}{Definition}
\newtheorem{lemma}[theorem]{Lemma}
\newenvironment{proof}[1][Proof]{\noindent\textbf{#1.} }{\hfill $\Box$\\[2mm]} 

\def\A{\ensuremath{\mathcal{A}}}
\def\R{\ensuremath{\mathcal{R}}}
\def\I{\ensuremath{\mathcal{I}}}
\def\O{\ensuremath{\mathcal{O}}}
\def\Nat{\ensuremath{\mathbb{N}}}

\def\x{\mathbf{x}}

\def\Skel{\operatorname{Skel}}
\def\IS{\textit{IS}}
\def\IStwo{\ensuremath{{\textit{IS}^2}}}
\def\ISone{\ensuremath{{\textit{IS}^1}}}
\def\RK{\R_{k}}

\def\Cont{\textit{Cont}}
\def\carrier{\mathit{carrier}}

\def\s {\mathbf{s}}
\def\t {\mathbf{t}}
\def\Chr{\operatorname{Chr}}

\newcommand{\myparagraph}[1]{\vspace{2pt}\noindent \textbf{#1}}

\newcommand{\remove}[1]{}

\title{Read-Write Memory and $k$-Set Consensus as an Affine Task}

\author{
Eli Gafni
\protect\footnote{UCLA} 
\hspace{1cm}
Yuan He\protect\footnotemark[1]  
\\
\\
Petr Kuznetsov
\protect\footnote{T\'el\'ecom ParisTech} 
\hspace{1cm}
Thibault Rieutord\protect\footnotemark[2]  
}

\date{}


\begin{document}

\maketitle

\begin{abstract}
The wait-free read-write memory model has been 
characterized as an iterated \emph{Immediate Snapshot} (IS) task.
The IS task is \emph{affine}---it can be defined as a (sub)set of simplices of the
standard chromatic subdivision.
It is known that the task of \emph{Weak Symmetry Breaking} (WSB)
cannot be represented as an affine task. 
In this paper, we highlight the phenomenon of a ``natural''
model that can be captured by an iterated affine task and, thus, by a
subset of runs of the iterated immediate snapshot model.
We show that the read-write memory model in which, additionally,
$k$-set-consensus objects can be used is, unlike WSB, ``natural'' by presenting
the corresponding simple affine task captured by a subset of
$2$-round IS runs.
Our results imply the first combinatorial characterization of models
equipped with abstractions other than read-write memory that
applies to generic tasks.
\end{abstract}

\section{Introduction}

A principal challenge in distributed computing is to devise protocols
that operate correctly in the presence of failures, 
given that system components (processes) are
asynchronous.

The most extensively studied \emph{wait-free} model of computation~\cite{Her91}
makes no assumptions about the number of failures that can occur, 
no process should wait for other processes to move for
making progress.
In particular, in a \emph{wait-free} solution of a distributed \emph{task}, a process participating in
the computation should be able to produce an output regardless of the
behavior of other processes.   

\myparagraph{Topology of wait-freedom.}
Wait-free task solvability in the read-write shared-memory model has been
characterized in a very elegant way through the
existence of a specific continuous map from geometrical
structures describing inputs and outputs of the task~\cite{HS99,HKR14}. 
A task $T$ is wait-free solvable using reads and writes
if and only if there exists a simplicial, chromatic map from a
subdivision of the \emph{input} simplicial complex to the \emph{output} simplicial complex,
satisfying the specification of $T$.
Thus, using the iterated \emph{standard chromatic
  subdivision}~\cite{HKR14} (one such iteration of the
\emph{standard simplex} $\s$, denoted by $\Chr\s$, is depicted in Figure~\ref{fig:scs}), we obtain a  
combinatorial representation of the wait-free model. 
Iterations of this subdivision capture precisely rounds of the iterated immediate snapshot (IIS)
model~\cite{BG97,HS99}.  

This characterization can be interpreted as follows: the
\emph{persistent} \emph{wait-free} read-write model can be captured, regarding task solvability, by an
\emph{iterated} (one-shot) Immediate Snapshot task.
Immediate Snapshot is, in turn, 
captured by the \emph{chromatic simplex agreement}
task~\cite{BG97,HS99} on
$\Chr\s$.    

\myparagraph{Beyond wait-freedom: $k$-concurrency and $k$-set consensus.}
Unfortunately, very few  tasks are solvable
\emph{wait-free} in the read-write shared-memory model~\cite{BG93b,HS99,SZ00}, so a
lot of efforts have been applied to characterizing task solvability in 
various \emph{restrictions} of the \emph{wait-free} model.  

%
A straightforward way to define such a restriction 
is to bound  the \emph{concurrency level} of runs~\cite{GG11-univ}: 
in the model of  $k$-concurrency, at most $k$ processes can be concurrently \emph{active},
i.e., after the invocation of a task and before 
terminating it.
This is a powerful abstraction known, when it comes to solving tasks, 
to be equivalent to the wait-free
model in which processes, in addition to read-write shared memory, can
access $k$-set consensus objects~\cite{GG09}.
Also, the $k$-concurrent task solvability proved to be a good way to
measure the power of shared-memory models equipped with failure
detectors~\cite{DFGK15}. 

Can we represent the model of read-write memory and  $k$-set consensus
as an iterated task?

\begin{wrapfigure}{r}{0.47\textwidth}
\center
\includegraphics[scale=0.27]{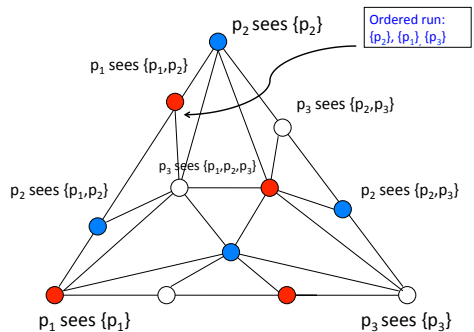}
\caption{\small $\Chr\s$ in the $2$-dimensional case.}
\label{fig:scs}
\end{wrapfigure}

\myparagraph{Iterated tasks for $k$-set consensus.} 
We show that the model of read-write memory and $k$-set consensus
objects can be captured by an iterated \emph{affine} task~\cite{GKM14}. 
The task is defined via a simplicial complex $\RK$, a specific subset of simplices of the second
chromatic subdivision $\Chr^2\s$~\cite{HKR14} in which, intuitively, at
most $k$ processes concurrently \emph{contend}.  (Examples of $\R_1$ and
$\R_2$ for the $3$-process system are given in Figure~\ref{fig:complexes}.) 
We show that the set of IIS runs corresponding to iterations of
this subcomplex $\RK$, denoted $\RK^*$ allows for solving precisely
the same set of tasks as the model of $k$-set consensus does.

Interestingly, our definition of what it means to solve a task in $\RK^*$
requires \emph{every} process to output.
This contrasts with the conventional definition of
task solvability (e.g., using $k$-set-consensus), 
where a failure may prevent a process from producing
an output and, thus, only \emph{correct} processes can be required to output.    
Indeed, $\RK^*$ does not account for process failures: every
process takes infinitely many steps in every run, but, because of the
use of iterated memory,  a ``slow'' process may not be seen by
``faster'' ones from some point.   
The requirement that every process outputs is indispensable in
an iterated characterization of  generic (colored) task solvability that may not allow
one process to ``adopt'' an output of another process.
Indeed, even if ``fast'' processes output, the ``slow'' ones should
be  able to continue in order to enable every correct process to
output in the corresponding $k$-set-consensus model.        
For example, task solvability with consensus objects is captured by 
 the ``total order'' subcomplex $\R_1$ (depicted for $3$ processes in the left part of
 Figure~\ref{fig:complexes}) in which, intuitively,   
\emph{every}  subset of processes should be able to solve consensus.  

Our result is established by the existence of two algorithms. 
The first algorithm solves the \emph{simplex agreement task}~\cite{HS99} on
$\RK$ in the model $k$-concurrency. 
By iterating this
solution, we can implement $\RK^*$ and, thus,  solve any task solvable
in $\RK^*$.
Then, by showing that the $k$-set consensus model solves every task that the
$k$-concurrency model solves, we derive that the former model also
solves every task solvable in $\RK^*$.    
The second algorithm \emph{simulates} runs of a given algorithm using
read-write memory and $k$-set-consensus objects  in
$\RK^*$. 
The simulation is quite interesting in its own right. 
Compared to simulations
in~\cite{HR10,GR10-opodis,GK11,BGK14,GKM14-podc}, our algorithm
ensures that every process eventually outputs in $\RK^*$,
assuming that the simulated algorithm ensures that every correct
process eventually outputs.

Thus, a task is solvable using iterations of $\RK$ if and only if it can
be solved  solvable wait-free using
$k$-set-consensus objects or, equivalently, $k$-concurrently.
Therefore, the $k$-set-consensus model has a bounded 
representation as an \emph{iterated affine task}: processes simply sequentially invoke instances of
$\RK$ for a bounded number of times, until they assemble
enough knowledge to produce an output for the task they are solving.     

Our results suggest a separation between ``natural'' models that have a matching affine task
and, thus, can be captured precisely by a subset of IIS runs and less
``natural'' ones, like WSB, having a manifold structure that is
not affine~\cite{GRH06}.  
%
%
%
We conjecture that such a combinatorial representation can also be
found for a large class of restrictions of the wait-freedom, beyond
$k$-concurrency and $k$-set consensus. 
The claim is supported by a recent derivation of the $t$-resilience
affine task~\cite{SHG16}.


\myparagraph{Related work.}
%
%
There have been several attempts to extend the topological
characterization of~\cite{HS99} to
models beyond wait-free~\cite{HR10,GK11,GKM14-podc}. However, these results
either only concern the special case of \emph{colorless} tasks~\cite{HR10},
consider weaker forms of solvability~\cite{GK11}, or also introduce a new
kind of \emph{infinite} subdivisions~\cite{GKM14-podc}. 

In particular, Gafni et al.~\cite{GKM14-podc} characterized task solvability in
models represented as subsets of IIS runs via \emph{infinite} subdivisions of input
complexes.  This result assumes a limited notion of task solvability
in the iterated model that only guarantees outputs to ``fast''
processes~\cite{Gaf98,RS12,BGK14} that are ``seen'' by every other
process infinitely often.   

In contrast with the earlier work, this paper studies the
inherent combinatorial properties of general
(colored) tasks and assumes the conventional notion of task solvability.  
In a sense, our results for the first time truly capture the combinatorial structure of
a model of computation beyond the wait-free one.


\myparagraph{Roadmap.} 
The rest of the paper is organized as follows. 
Section~\ref{sec:model} gives model definitions, briefly overviews the topological representation of iterated
shared-memory models. 
In Section~\ref{sec:def}, we present the definition of $\RK$
corresponding to the $k$-concurrency model.
In Section~\ref{sec:Rk}, we show that $\RK$ can be implemented
in the $k$-set-consensus model and that any task solvable in 
the $k$-set-consensus model can be solved by iterating $\RK$.
Section~\ref{sec:disc} discusses related models and open questions.

\section{Preliminaries}
\label{sec:model}



%
Let $\Pi$ be a system composed of $n$ asynchronous processes, $p_1,\ldots,p_n$.
We consider two models of communication: (1)~\emph{atomic
snapshots}~\cite{AADGMS93} equipped with \emph{$k$-set consensus objects} and
(2)~\emph{iterated immediate snapshots}~\cite{BG97,HS99}.

\myparagraph{Atomic snapshots and $k$-set consensus.}
The atomic-snapshot (AS) memory is represented as a vector of shared
variables, where processes are associated to distinct vector 
positions, and exports two operations: \emph{update} and \emph{snapshot}. 
An \emph{update} operation performed by $p_i$
replaces the shared variable at position $i$ with a new value and a
\emph{snapshot} returns the current state of the vector.  

The model in which only AS can be invoked is called the AS model.
The model in which can be invoked, in addition to AS, also $k$-set consensus objects,
for some fixed $k\in\{1,\ldots,n-1\}$, 
is called the $k$-set consensus model. 

\myparagraph{Iterated immediate snapshots.}
In the iterated immediate snapshot (IIS) model~\cite{BG97}, 
processes goes through the ordered sequence of
independent memories $M_1$, $M_2$, $\ldots$. 
Each memory $M_r$ is accessed by a process with a single \emph{immediate
snapshot} operation~\cite{BG93a}: the operation performed by
$p_i$ takes a value $v_i$ and
returns a set $V_{ir}$ of values submitted by other processes (w.l.o.g, we assume that values
submitted by different processes are distinct), so that 
the following properties are satisfied: (self-inclusion) $v_i \in V_{ir}$; (containment) $V_{ir}\subseteq
V_{jr}$; and (immediacy) $v_i \in V_{jr}$ $\Rightarrow$ $V_{ir}\subseteq
V_{jr}$. 

 In the IIS communication model, we assume that processes run the
\emph{full-information} protocol: 
the first value each process writes is its \emph{input value}.
For each $r>1$,  the outcome of the
immediate snapshot operation on memory $M_{r-1}$ is submitted 
as the value for the immediate snapshot operation on memory $M_r$.
After a certain number of such (asynchronous) rounds, 
a process may gather enough information to \emph{decide}, i.e.,
to produce an irrevocable non-$\bot$ output value. 
A \emph{run} of the IIS communication model is thus a sequence $V_{ir}$,
$i\in\Nat_n$ and $r\in\Nat$,
determining the outcome of the immediate-snapshot operation for every
process $i$ and each iterated memory $M_r$.

\myparagraph{Failures and participation.}
In the AS or $k$-set consensus model, a process that takes only finitely many steps of the
full-information protocol in a given run is called \emph{faulty}, otherwise it is called
\emph{correct}. A process is called \emph{participating} if it took at
least one step in the computation. We assume that in its first step, a
process writes its input in the shared memory. The set of
participating processes in a given run is called the
\emph{participating set}. Note that, since every process writes its input value in its first step, the inputs of
participating processes are eventually known to every process that
takes sufficiently many steps.

In contrast, the IIS model does not have the notion of a faulty
process. Instead, a process may appear ``slow'' ~\cite{RS12,Gaf98-iis},
i.e., be late in accessing iterated memories from some point on so
that some ``faster'' processes do not see them. 

\myparagraph{Tasks.}
In this paper, we focus on 
distributed \emph{tasks}~\cite{HS99}.
A process invokes a task with an input value and the task returns an output value, so that the inputs and the
outputs across the processes which invoked the task, respect the task
specification.
Formally, a \emph{task} is defined through a set $\I$ of input vectors (one input value for each process),
a set $\O$ of output vectors (one output value for each process), and a total relation $\Delta:\I\mapsto 2^{\O}$
that associates each input vector with a set of possible output vectors. An input $\bot$ denotes a \emph{not
participating} process and an output value $\bot$ denotes an
\emph{undecided} process. Check~\cite{HKR14} for more details on the definition.

\myparagraph{Protocols and runs.}
A \emph{protocol} is a distributed automaton that, for each local
state of a process, stipulates which shared-memory operation and which
state transition the process is allowed to perform in its next step.
We assume here \emph{deterministic} protocols, where only one
operation and state transition is allowed in each state.
A \emph{run} of a protocol is defined as a
sequence of states and shared-memory operations.      

A process is called \emph{active} at the end of a finite run $R$ if
it participates in $R$ but did not returned at the end of $R$.
Let $\textit{active}(R)$ denote the set of all processes that are
active at the end of $R$.     

A run $R$ is \emph{$k$-concurrent} ($k=1,\ldots,n$) if 
at most $k$ processes are \emph{concurrently active} in $R$, i.e.,
$\max\{|\textit{active}(R')|;\; R' \mbox{ prefix of }$R$\}\leq k$.
The \emph{$k$-concurrency} model is the set of $k$-concurrent AS runs.  

\myparagraph{Solving a task.}
A protocol solves a task $T=(\I,\O,\Delta)$ 
 \emph{in the $k$-set-consensus model} (resp., \emph{$k$-concurrently})
if it ensures that in every run of
the $k$-set-consensus model (resp., every $k$-concurrent AS run) in which
processes start with an input vector $I\in\I$, 
(1)~all decided values form a vector $O\in\O$ such that $(I,O)\in\Delta$,
and 
(2)~every correct process decides. 

It is known that the $k$-concurrency model is \emph{equivalent}  to
the $k$-set-consensus model~\cite{GG09}\footnote{In
  fact, this paper contains a self-contained proof of this
  equivalence result.}:
any  task that can be solved $k$-concurrently can also be solved 
in the $k$-set-consensus model, and vice versa.

    
\myparagraph{Standard chromatic subdivision and IIS.}
To give a combinatorial representation of the IIS model, we use the
language of \emph{simplicial complexes}~\cite{Spanier,HKR14}. 
In short, a simplicial complex is defined as a set of \emph{vertices}
and an inclusion-closed set of vertex subsets, called \emph{simplices}. The
dimension of a simplex $\sigma$ is is the number of vertices in it minus one. 
Any subset of these vertices is called a \emph{face} of the simplex.  
A simplicial complex is \emph{pure} (of dimension $n$) if each its
simplices are contained in a simplex of dimension $n$.

A simplicial complex is \emph{chromatic} if it is equipped with a
\emph{coloring} non-collapsing map $\chi$ from its vertices to 
the {\em standard $(n-1)$-simplex} $\s$ of $n$ vertices, 
in one-to-one correspondence with $n$ {\em colors} $1,2, \dots, n$.      
All simplicial complexes we consider here are pure and chromatic. 

Refer to Appendix~\ref{app:topprimer} for more details on the formalism.

For a chromatic complex $C$, we let $\Chr C$ be the subdivision of $C$ obtained by replacing each
simplex in $C$ with its \emph{chromatic subdivision}~\cite{HS99,Koz12, Lin10}. The vertices of $\Chr C$ are pairs $(v, \sigma)$, where $p$
is a vertex of $C$ and $\sigma$ is a simplex of $C$ containing $v$.
vertices $(v_1,\sigma_1)$, $\ldots$, $(v_m,\sigma_m)$ form a simplex
if all $v_i$ are distinct and all $\sigma_i$ satisfy the properties of
immediate snapshots.    
Subdivision $\Chr^1\s$ for the $2$-dimensional simplex $\s$ is given
in Figure~\ref{fig:scs}. Each vertex represents a local state of one
of the three processes $p_1$, $p_2$ and $p_3$ (red for $p_1$, blue for
$p_2$ and white for $p_3$) after it takes a single immediate snapshot.
Each triangle ($2$-simplex) represents
a possible state of the system. A corner vertex corresponds to a local state in
which the corresponding process only sees itself (it took its
snapshot before the other two processes moved). An interior vertex
corresponds to a local state in which the process sees all three
processes. The vertices on the $1$-dimensional faces capture the
snapshots of size $2$.  

If we \emph{iterate} this subdivision $m$ times, each time applying
the same subdivision to each of the simplices, we
obtain the $m^{th}$ chromatic subdivision, $\Chr^m C$.
It turns out that  $\Chr^m \s$ precisely captures the $m$-round
(full-information) IIS model, denoted IS$^m$~\cite{HS99}. 
Each run of IS$^m$
corresponds to a simplex in $\Chr^m \s$.
Every vertex $v$ of $\Chr^m \s$ is thus defined as $(p,
\IS^1(p,\sigma),\ldots,\IS^m(p,\sigma))$, where each $\IS^i(p,\sigma)$
is interpreted as the set of processes appearing in the $i^{th}$ IS
iteration obtained by $p$ in the corresponding $\IS^m$ run.   
The \emph{carrier} of vertex $v$ is then defined as the set of all
processes seen by $p$ in this run, possibly through the views of other processes:
it is the smallest face of $\s$ that contains $v$ in its geometric
realization~\cite{HKR14} (Appendix~\ref{app:topprimer}). 

\myparagraph{Simplex agreement.}
As we show in this paper, the model of $k$-concurrency can be captured
by an iterated \emph{simplex agreement} task~\cite{BG97,HS99}.

Let $L$ be a subcomplex of $\Chr^2\s$.
In the simplex agreement task,  every process starts with the vertex of
$\s$ of its color as an input and finishes with a vertex of $\Chr^m\s$
as an output, so that all
outputs constitute a simplex of $\Chr^2\s$ contained in the face of $\s$ constituted
by the participating processes.     

Formally, the task is defined as $(\s, L,\Delta)$, where, for every face $\t \subseteq \s$, $\Delta(\t) = L \cap \Chr^2
\t$. 
By running~$m$ iterations of this task, we obtain $L^m$,
a subcomplex of $\Chr^{2m}\s$, corresponding to a subset of 
$\IS^{2m}$ runs (each iteration includes two IS rounds).    

%
%

\section{The complex of $k$-set consensus}
\label{sec:def}


We now define $\RK$, a subcomplex of $\Chr^2\s$, that precisely captures the ability of $k$-set consensus (and read-write memory) to solve tasks.
The definition of $\RK$ is expressed via a restriction on the simplices of $\Chr^2\s$
that bounds the size of \emph{contention sets}.
%
Informally, a contention set of a simplex $\sigma\in\Chr^2\s$ 
(or, equivalently, of an $\IS^2$ run) is a set of
processes that ``see each other''. 
When a process $p_i$ starts its $\IS^2$ execution after another process $p_j$
terminates, $p_i$ must observe $p_j$'s input, but not vice versa.
Thus, a set of processes that see each others' inputs must have been
concurrently active at some point.  
Note that processes can be active at the
same time but the immediate snapshots outputs might not permit to
detect it.

Topologically speaking, a contention set of a simplex
$\sigma\in\Chr^2\s$ is a set of processes in $\sigma$ sharing
the same carrier, i.e., a minimal face $\t\subseteq\s$ that
contains their vertices.
Thus, for a given simplex $\sigma\in\Chr^2\s$, the set of contention
sets is defined as follows: 
\begin{definition}[Contention sets]  
\[
\Cont(\sigma) = \{S\subseteq\Pi, \forall p,p' \in S,\carrier(p,\sigma) = \carrier(p',\sigma)\}{}.
 \]
\end{definition}
\begin{figure}[t]
\center
\includegraphics[scale=.85]{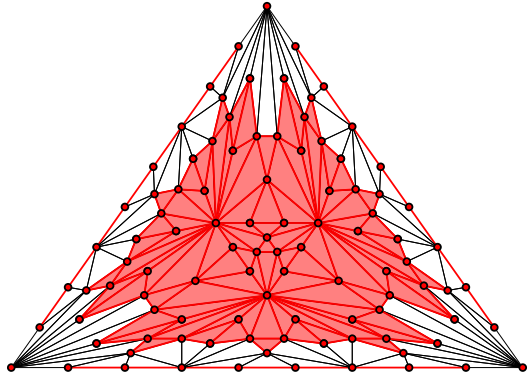}
\caption{\small Contention sets (simplices in red) in a $3$-process system.}
\label{fig:chr2Contention}
\end{figure}

Contention sets for simplices of $\Chr^2\s$ in a $3$-process system are
depicted in Figure~\ref{fig:chr2Contention}: for each simplex $\sigma\in\Chr^2\s$,
every face of $\sigma$ that constitutes a red simplex is a contention set of
$\sigma$.
In an interior simplex, every set of processes are contention sets.
Every ``total order'' simplex (shown in blue in
Figure~\ref{fig:R1}), matching a run in which processes proceed, one by one,
in the same order in both $\IS^1$ and~$\IS^2$, has only three singleton as contending sets. 
All other simplices include a contention set of two processes which consists of the vertices at the boundary.



Now $\RK$ is defined as the set of all 
simplices in $\Chr^2\s$, in which 
the contention sets of have cardinalities at most $k$:
\begin{definition}[Complex $\RK$]
\[
\RK = \{\sigma \in \Chr^2\s,\forall S\in \Cont(\sigma),|S|\leq k\} {}.
\]
\end{definition}
It is immediate that the set of simplices in $\RK$ constitutes a
simplicial complex: every face $\tau$ of $\sigma\in\RK$ is also in $\RK$.  

\begin{figure}[t]
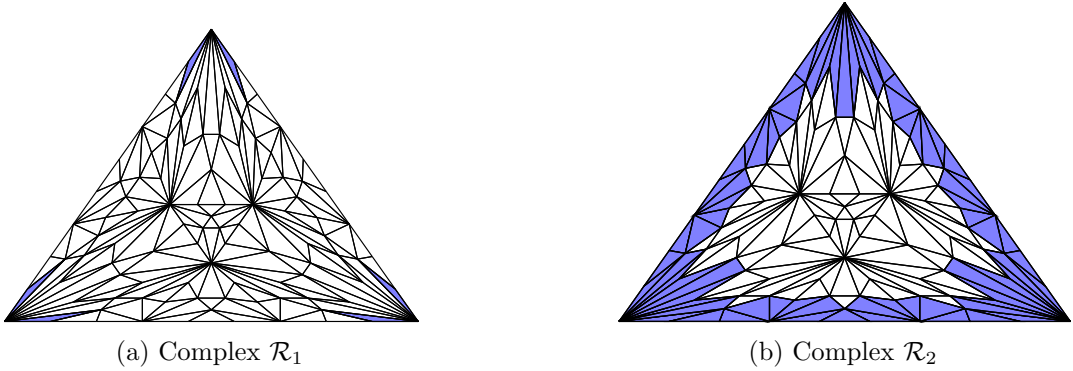

\captionsetup[subfigure]{justification=centering}
  \begin{minipage}[b]{.49\linewidth}
    \centering
		\includegraphics[scale=1.1]{R1}
    \subcaption{Complex $\R_1$}\label{fig:R1}
  \end{minipage}
  \hfill
  \begin{minipage}[b]{.49\linewidth}
    \begin{center}
    		\includegraphics[scale=1.2]{R2}
    \subcaption{Complex $\R_2$}\label{fig:R2}
    \end{center}
  \end{minipage}
  \caption{$\R_1$ and $\R_2$ (in blue) for $3$ processes.}
  \label{fig:complexes}
\end{figure}

Examples of $\R_1$ and $\R_2$ for a $3$-process system are shown in
Figures~\ref{fig:R1} and~\ref{fig:R2}, respectively.
Obviously, for the unrestricted $3$-set consensus case, $\R_3=\Chr^2\s$. 
Note that $\R_1$ only contains six ``total order'' simplices, while
$\R_2$ consists of all simplices of $\Chr^2\s$ that touch the boundary.


%


%

\section{From $k$-set consensus to $\RK^*$ and back}
\label{sec:Rk}

We show that any task solvable with $k$-set consensus (and read-write
shared memory) can be solved in $\RK^*$, and vice versa.
The main result is then established via \emph{simulations}: a run of
an algorithm solving a task in one model is simulated in the other.

\subsection{From $k$-set consensus to $k$-concurrency}
\label{subsec:Kset<->Kconc}

We first show that a $k$-concurrent shared memory system is
equivalent, regarding task solvability, to a shared memory system
enhanced with $k$-set consensus objects.
The result has been stated in a technical report~\cite{GG09}, but no explicit proof has
been given available in the literature until now, and we fill the gap below.
For the sake of completeness and to make referencing simpler, we
propose here a direct simulation with proofs.

\myparagraph{Simulating a $k$-process shared memory system.}
%
We employ \emph{generalized state machines} (proposed in~\cite{GG11-univ} and
extended in~\cite{RST14}) that allow for simulating a $k$-process read-write
memory system in the $k$-set-consensus model.      
To ensure consistency of simulated read and write operations, we use 
\emph{commit-adopt} objects~\cite{Gaf98} that can be implemented using
reads and writes.
A commit-adopt object exports one operation 
 $\textit{propose}(v)$ that takes a parameter in an arbitrary range
and returns a couple
$(\emph{flag},v')$, where $\textit{flag}$ can be either
$\textit{commit}$ or $\textit{adopt}$ and
where $v'$ is a previously proposed value.
Moreover, if a process returns a $\textit{commit}$
flag, then every process must return the same 
value.
Further, if no two processes propose different values, then
all returned flags must be $\textit{commit}$.

Liveness of the simulation relies on calls to \emph{$k$-simultaneous
  consensus} objects~\cite{AGRRT10}.
To access a $k$-simultaneous consensus object, a process proposes a
vector of $k$ inputs, one for each of the  \emph{consensus instances},
$1,2,\ldots,k$,
and the object returns a couple $(i,v)$, where \emph{index} $i$
belongs to $\{1,\ldots,k\}$
and $v$ is a value proposed by some process at index $i$. 
It ensures that no two processes obtain different values with the same index.
Moreover, if $\ell\leq k$ distinct input vectors are proposed then
only values at indices $1,\ldots,\ell$ can be output. 
The $k$-simultaneous consensus object is equivalent to $k$-set-consensus in a
read-write shared-memory system~\cite{AGRRT10}. 

\begin{algorithm}[t]
 \caption{$k$ processes shared memory system simulation: process $p_i$\label{Alg:K-conc}}
 \begin{small}
\SetKwRepeat{Repeat}{Repeat forever}{End repeat}%
$\mathbf{Shared Objects}$: $KSC[1\dots]$ $\mathbf{:\ k-simultaneous\ consensus\ objects}$\;
$CA[1\dots][1\dots k]$ $\mathbf{:\ commit-adopt\ objects}$\;
$MEM[1\dots n][1\dots k]$ $\mathbf{init}$ $(-1,\bot)$ $\mathbf{:\ single\ writer\ shared\ memory\ array}$\;
$\mathbf{Init}$: $r_i \leftarrow 0;$ \lForEach{$m \in \{1,\dots,k\}$}{$(WC_i[m],\textit{View}_i[m]) \leftarrow (0,\emptyset)$}

\vspace{1em}

\Repeat{}{
	$r_i \leftarrow r_i+1$\;
	$(\textit{Index}_i,\textit{Value}_i) \leftarrow KSC[r_i].\textit{propose}(WC,\textit{View}_i)$\;\label{Alg:Kc:SC}
	$(Flag_i[\textit{Index}_i],val_i) \leftarrow CA[r_i][\textit{Index}_i].\textit{propose}(\textit{Value}_i)$\;\label{Alg:Kc:1CA}
	\lIf{$val_i=(c,*) \mathbf{\ with\ } c\geq WC_i[\textit{Index}_i]$}{$(WC_i[\textit{Index}_i],\textit{View}_i[\textit{Index}_i]) \leftarrow val_i$}
	\ForEach{$m \in \{1,\dots,k\} \setminus \textit{Index}_i$}{
		$(Flag_i[m],val_i) \leftarrow CA[r_i][m].\textit{propose}(\textit{View}_i[m])$\;\label{Alg:Kc:OtherCA}
		\lIf{$val_i=(c,*) \mathbf{\ with\ } c\geq WC_i[m]$}{$(WC_i[m],\textit{View}_i[m]) \leftarrow val_i$}
	}	
	\ForEach{$m \in \{1,\dots,k\}$}{
		\If{$Flag_i[m]=Commit$}{\label{Alg:Kc:Commit}
			$MEM[i][m].\textit{Update}(WC_i[m],\textit{WriteVal}(WC_i[m],\textit{View}_i[m]))$\;\label{Alg:Kc:Write}
			$WC_i[m] \leftarrow WC_i[m]+1$\;\label{Alg:Kc:WCincr}
			$\textit{View}_i[m]=CurWrites(MEM.\textit{Snapshot}())$\;\label{Alg:Kc:Mainselect}
		}
	}	
}

\vspace{1em}

$\mathbf{With}\ CurWrites(MEM_{val}) =$\label{Alg:Kc:CurWrites1}\\
\lForEach{$m \in \{1,\dots,k\}$} {$curWC[m]=-1$, $curWrite[m]=\bot$}
\ForEach{$(m,l) \in \{1,\dots,k\}\times\{1,\dots,n\}$} {
	\If{$MEM_{val}[l][m].WC>curWC[m]$}{
	$curWC[m]=MEM_{val}[l][m].WC$, $curWrite[m]=MEM_{val}[l][m].Value$\;
	}
}
$\mathbf{return}$ $curWrite$\;\label{Alg:Kc:CurWrites2}
\end{small}
\end{algorithm}

Our simulation is described in  Algorithm~\ref{Alg:K-conc}.
We use three shared abstractions: an infinite array of
$k$-simultaneous consensus objects $KSC$, an infinite array of arrays
of $k$ indexed commit-adopt objects $CA$, and a single-writer
multi-reader memory $MEM$
with $k$ slots.
%

In every round, processes use the corresponding $k$-simultaneous
consensus object first (line~\ref{Alg:Kc:SC}) and then go through the set of $k$
commit-adopt objects (lines~\ref{Alg:Kc:1CA}--\ref{Alg:Kc:OtherCA}),
starting with the index output by the $k$-simultaneous consensus object
(line~\ref{Alg:Kc:1CA}). It is guaranteed
that at least one process commits, in particular, 
process $p_j$ that is the first to return from its first commit-adopt
invocation in this round (on a commit-object $C$), because
any other process  with a
different proposal must access a different commit-adopt object first
and,  thus, must invoke $C$ after $p_j$ returns. 
To ensure that a unique written value is selected, processes replace
their current proposal values with the value adopted by the
commit-adopt objects (lines~\ref{Alg:Kc:1CA}--\ref{Alg:Kc:OtherCA}).
Note that the processes do
not select values corresponding to an older  round of simulation, to
ensure that processes
do not alternate committing and adopting the same value indefinitely.

In the simulation, the \emph{simulating} processes propose snapshot results for the
\emph{simulated} processes. Once a proposed snapshot has been committed,
a process  stores in the shared memory the 
value that the simulated process must write in its next step (based on
its simulated algorithm), equipped with the corresponding \emph{write
counter}
(line~\ref{Alg:Kc:Write}). The write counter is then incremented and a
new snapshot proposal is computed (line~\ref{Alg:Kc:Mainselect}).
To compute a simulated snapshot,
for each process, we select the most recent value available in the
memory $MEM$ by comparing the write counters $WC$ (auxiliary function
$\textit{CurWrites}$ at
lines~\ref{Alg:Kc:CurWrites1}--\ref{Alg:Kc:CurWrites2}).

\begin{lemma}
Algorithm~\ref{Alg:K-conc} provides a non-blocking simulation of a
$k$-process read-write shared-memory system in the $k$-set consensus
model.
Moreover, if there are $\ell<k$ active processes, then 
one of the first $\ell$ simulated processes is guaranteed to make progress.\label{lem:KprocSim} 
\end{lemma}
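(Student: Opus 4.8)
The plan is to establish two properties of Algorithm~\ref{Alg:K-conc}: (i) \emph{safety}---the sequence of simulated snapshots it produces is consistent with some run of a $k$-process atomic-snapshot system, and (ii) \emph{liveness}---the simulation is non-blocking, and moreover when only $\ell<k$ processes are active one of the first $\ell$ simulated processes keeps making progress. For safety, I would fix a simulated index $m\in\{1,\dots,k\}$ and argue that the values committed through $CA[1][m], CA[2][m],\dots$ form a well-defined, totally ordered sequence of ``write-then-snapshot'' steps for simulated process $m$. The key facts are: by the commit-adopt specification, at most one value is committed at each $CA[r_i][m]$, and every process that later touches that object adopts a value consistent with it; the write-counter guard (``only overwrite if $c\geq WC_i[m]$'') guarantees that adopted views never regress to an earlier simulated round, so no process oscillates between committing and adopting. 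Hence for each $m$ the simulated process $m$ executes an alternating sequence (write value with counter $WC$, then take a snapshot, then write the next value, \dots), and the snapshot it reads via $\textit{CurWrites}$ always returns, for each other index, the value carried by the largest write counter actually present in $MEM$---which is exactly an atomic-snapshot semantics, since write counters are monotone per writer and a committed write is visible before the committing process proceeds to its snapshot.

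For liveness, the central observation (already sketched in the text preceding the lemma) is that in each round $r$ at least one process commits on its \emph{first} commit-adopt object: let $p_j$ be the first process to return from line~\ref{Alg:Kc:1CA} in round $r$, accessing $CA[r][t]$ for $t=\textit{Index}_j$; any process proposing a different value to $CA[r][t]$ must, by the $k$-simultaneous-consensus agreement property, have received a different index and therefore accesses $CA[r][t]$ only \emph{after} completing its own first commit-adopt call, hence after $p_j$ has returned; so every proposal $CA[r][t]$ sees before $p_j$ returns is identical, forcing $p_j$ to commit. A committed value at line~\ref{Alg:Kc:1CA} drives lines~\ref{Alg:Kc:Write}--\ref{Alg:Kc:Mainselect}, i.e.\ it advances the corresponding simulated process by one full step. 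This gives non-blocking progress. For the refined ``$\ell<k$ active processes'' claim, I would invoke the ``moreover'' clause of the $k$-simultaneous-consensus specification: if at most $\ell$ distinct input vectors are ever proposed to $KSC[r]$, then only indices in $\{1,\dots,\ell\}$ can be returned, so the committing process $p_j$ above necessarily commits on some $CA[r][t]$ with $t\leq\ell$, advancing one of the first $\ell$ simulated processes.

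Putting the pieces together: safety shows that whatever the simulation outputs corresponds to a genuine $k$-process AS run, and liveness shows that infinitely many simulated steps are performed overall, with at least one of the first $\ell$ simulated processes taking infinitely many steps when only $\ell<k$ processes participate. That is precisely the statement of Lemma~\ref{lem:KprocSim}. The main obstacle I anticipate is the safety argument around the write-counter guard: one must check carefully that committing a ``stale'' view (one tagged with an older round) is impossible for a process that has already advanced, so that the per-index committed sequence is genuinely a chain of consecutive steps with no gaps and no repetitions---this is what makes the interleaving of committed writes and snapshots linearizable as an atomic-snapshot execution, and it is the place where the explicit condition ``$c\geq WC_i[m]$'' in the algorithm is used in an essential way.
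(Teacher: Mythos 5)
Your proposal is correct and follows essentially the same route as the paper's proof: the same safety/liveness decomposition, the same ``first process to return from its first commit-adopt must commit'' argument for per-round progress, the same appeal to the index-bounding clause of $k$-simultaneous consensus for the $\ell<k$ case, and the same reliance on the guard $c\geq WC_i[m]$ to prevent regression. The one place the paper does more work than your sketch is exactly the point you flag as the anticipated obstacle: it introduces two auxiliary claims (write counters never decrease; a counter reaching $c$ implies a validated write for every smaller counter) to rule out two distinct values being committed for the same simulated process and counter across \emph{different} rounds, which is what makes the committed sequence a gap-free, repetition-free chain.
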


The proof of Lemma~\ref{lem:KprocSim} can be found in
Appendix~\ref{app:Proof_KprocSim}.
The proof is constructed by showing that:
(1) No two different written values are computed for the same
simulated process and the same write counter;
(2) At every round of the simulation, at least one simulator commits
a new simulated operation; (3) Every committed simulated snapshot
operation can be linearized at the moment when the actual snapshot operation which
served for its computation took place; and (4) Every
simulated write operation can be linearized to the linearization time
of the first actual write performed by a simulator with the
corresponding  value.

\myparagraph{Using the extended BG-simulation to simulate a $k$-concurrent execution.}
We have shown that a $k$-process read-write shared memory system can
be simulated in the $k$-set-consensus model. Now we show that this simulated system
can be used to simulate $k$-concurrency.
The idea is to make the obtained $k$-process system run a \emph{BG-simulation}
protocol~\cite{BG93b,BGLR01}, so that  at most $k$ simulated processes are active at a time.

The BG-simulation technique 
allows $k+1$ processes $s_1,\ldots,s_{k+1}$, called
\emph{BG-simulators}, to wait-free simulate a \emph{$k$-resilient}
execution of any protocol $\A$ on $m$ processes $p_1,\ldots,p_m$
($m>k$). The simulation guarantees that each simulated step of every
process $p_j$ is either agreed upon by all simulators, or one less
simulator participates further in the simulation for each step which
is not agreed on (in this, we say that the step simulation is
\emph{blocked} because of the faulty or slow simulator). 

%
The technique was later turned into
\emph{extended BG-simulation}~\cite{Gaf09-EBG}. 
The core of this technique is
the \emph{Extended Agreement} (EA) algorithm, which ensures safety of
consensus but not necessarily liveness:
it may \emph{block} if some process has
slowed down in the middle of its execution. 
Additionally, the EA protocol exports an \emph{abort} operation that, when applied to a
blocked EA instance, re-initializes it so that it can move forward
until an output is computed or another process makes it block again.

Our simulation is quite simple. Before running the $k$-process
simulation using Algorithm~\ref{Alg:K-conc}, processes write their input
states in the memory. The $k$-process simulation is used to run an \emph{extended BG-simulation} that
executes the code of the task solution for the simulated initial
processes. Once  a task output for a simulated process is available in
the shared memory, the process stops participating in the $k$-process
simulation.
A process that completed its initial write but has not yet been
provided with a task output is called \emph{active},
i.e., this process is both available to be simulated by the
BG-simulators and is participating in the simulation of the $k$ BG-simulators. 

In our case, we use the \emph{extended BG-simulation} in a slightly
different manner than in the original paper~\cite{Gaf09-EBG}.
Instead of running processes in
lock-step as much as possible, by selecting the least advanced
available process (\emph{breadth-first} selection), the BG-simulators run the
processes with as low concurrency as possible by selecting the most
advanced available process (\emph{depth-first} selection). To prevent
simulators from getting blocked on all active processes, a
BG-simulator stops participating if the number of \emph{active}
processes is strictly lower than its \emph{identifier}
(the index of the simulated process the BG-simulator is executed on, from $1$ to $k$).
If a \emph{BG-simulator} is blocked on all \emph{active}
simulated processes, but has an identifier lower or equal to their
number,  it uses the \emph{abort} mechanism to exclude
\emph{BG-simulators} with large identifiers that should be stopped.

\begin{lemma}
All tasks solvable in the $k$-concurrency model can be solved in the $k$-set-consensus model.\label{lem:kSet->kConc}
\end{lemma}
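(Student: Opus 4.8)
The plan is to combine the two preceding building blocks. By Lemma~\ref{lem:KprocSim}, a $k$-process read-write shared-memory system can be simulated in the $k$-set-consensus model by a non-blocking protocol with the extra progress guarantee: whenever fewer than $k$ simulated processes are active, one of the lowest-indexed active ones makes progress. On top of this $k$-process system we run the \emph{extended BG-simulation} in \emph{depth-first} mode, as sketched above, to simulate a $k$-concurrent execution of a given protocol $\A$ that solves a task $T$ $k$-concurrently. So given such an $\A$, I would describe the composite algorithm: each of the $n$ real processes first writes its input state; then it joins the $k$-process simulation of Algorithm~\ref{Alg:K-conc}, and inside the simulated process $s_j$ it runs the extended-BG code that advances the most-advanced \emph{active} real process (one whose initial write is posted but whose task output is not yet posted), stopping if the number of active processes is below $j$, and using \emph{abort} to evict simulators with too-large identifiers when blocked on all active processes. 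Once a task output for some real process appears in memory, that process adopts it and halts.

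Correctness (safety) is inherited: the $k$-process simulation is linearizable (item (3)–(4) of Lemma~\ref{lem:KprocSim}), and the extended BG-simulation faithfully reproduces steps of $\A$ that all live BG-simulators agree on; hence the multiset of produced outputs is a legal execution of $\A$, so $(I,O)\in\Delta$. The key remaining point is \emph{$k$-concurrency of the simulated execution}: I would argue that a real process is counted as active only between its initial write and the posting of its output, and that the depth-first selection rule together with the ``stop if active count $<$ identifier'' rule guarantees that at no reachable configuration are more than $k$ real processes simultaneously being advanced — new processes are picked up only when a slot is freed, exactly as in the standard depth-first BG argument.

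The main obstacle, and the part that needs the most care, is \textbf{liveness}: showing every \emph{correct} real process eventually gets an output. I would reason as follows. Suppose for contradiction some correct process never outputs. Then from some point on there is a nonempty set of active processes; let $\ell\le k$ be the (eventual, stable) number of active ones. By Lemma~\ref{lem:KprocSim}, one of the first $\ell$ simulated processes $s_1,\dots,s_\ell$ keeps making progress. A simulated process makes progress in the extended BG-simulation only if it is not permanently blocked on every active real process; and a BG-simulator blocked on some active process is the unique simulator that ``owns'' that blocked step, so only one simulator can be blocked per active process, giving at most $\ell-1$ blocked simulators among the $\ell$ that should still be running. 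Hence some running simulator with identifier $\le\ell$ is blocked on all active processes and invokes \emph{abort}, which re-initializes the blocked instance; combined with the progress of $s_1,\dots,s_\ell$ this forces the depth-first-selected active process to eventually take its next step of $\A$. Since $\A$ solves $T$ $k$-concurrently and the simulated run is $k$-concurrent with all $\ell$ of these processes correct in it, each eventually decides, its output is posted, and the active set shrinks — contradicting stability of $\ell$. Therefore every correct process outputs, and $T$ is solved in the $k$-set-consensus model. $\Box$
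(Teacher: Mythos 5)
Your proposal follows essentially the same route as the paper's proof: run the extended BG-simulation in depth-first mode on top of the $k$-process system of Lemma~\ref{lem:KprocSim}, bound concurrency by the fact that $m\leq k$ simulators can block at most $m$ simulated codes, and obtain liveness from the progress guarantee for the first $\ell$ simulated processes together with the \emph{abort} mechanism evicting the superfluous high-identifier simulators. The argument is correct and matches the paper's in all essentials.
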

\begin{proof}
The BG-simulators select the most advanced \emph{available} (not blocked by
a BG-simulator) process to execute, thus, a
process never simulated yet is selected if and only if all currently 
started simulations of \emph{active} processes are \emph{blocked}. But
at most $m$ simulated codes can be blocked by $m$ BG-simulators
($m\leq k$), thus,
at most $k$ \emph{active} processes can be concurrently
simulated and at least of them is not blocked.
Moreover, when there are $\ell<k$ \emph{active} processes,
then progress is guaranteed to one of the first $\ell$ BG-simulators,
see Lemma~\ref{lem:KprocSim}. The remaining $k-\ell$ BG-simulators stop
participating in the simulation and cannot block the $\ell$
first ones, as they are eventually excluded using the \emph{abort}
mechanism. The \emph{abort} mechanism is used only finitely many
times, only once a BG-simulator witnesses that the number of
\emph{active} processes has decreased (and there are finitely many
processes). Therefore, as long as there are \emph{active correct}
processes, the \emph{BG-simulation} \emph{makes progress} and
eventually every \emph{correct} process obtains an output in the task solution.
\end{proof}

\subsection{From $k$-concurrency to $\RK$.}
\label{subsec:Kset->RK} 
We now show that $k$-concurrency can
\emph{solve} $\RK$, i.e., it can solve the chromatic simplex agreement task
on the subcomplex $\RK$.
\begin{lemma}
A $k$-concurrent execution of two rounds of any immediate snapshot
algorithm solves the simplex agreement task on $\RK$.
\label{lem:SimplexAgreementRK}
\end{lemma}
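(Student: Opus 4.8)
The statement asserts that running two ordinary IS rounds under a $k$-concurrent scheduler always produces an output simplex lying in $\RK$, i.e., the resulting $\IS^2$ run has all contention sets of size at most $k$. The plan is to argue by contradiction: suppose the output simplex $\sigma \in \Chr^2\s$ has a contention set $S$ with $|S| = k+1$. Recall $S \in \Cont(\sigma)$ means every $p, p' \in S$ share the same carrier, i.e., the smallest face of $\s$ containing their vertices is the same set $\t \subseteq \Pi$. I would first unfold what ``same carrier'' means operationally in terms of the two-round IS views: for $p,p' \in S$, the set of processes that $p$ sees (in round $1$ directly, or indirectly through other processes' round-$1$ views that appear in $p$'s round-$2$ snapshot) equals the set that $p'$ sees, and this common set is exactly $\t$.

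The key step is to translate this topological statement into a statement about concurrency in the underlying $k$-concurrent AS run. The intuition is given in the paper's informal discussion: if $p_i$ starts its whole $\IS^2$ execution after $p_j$ has terminated both of its IS rounds, then $p_i$ must see $p_j$ (it reads a memory $p_j$ already wrote) but $p_j$ cannot see $p_i$, so they would have different carriers --- $p_j$'s carrier would not contain $p_i$. Hence two processes in a common contention set $S$ cannot be ``temporally separated'' in this way. I would make this precise by showing: for any two processes in $S$, there is a moment in the run at which both are active (have started their first IS round and not yet returned from their second). The cleanest route is to fix an ordering of the processes in $S$ by the time they complete their second IS round (i.e., return), say $q_1, \dots, q_{k+1}$ with $q_1$ finishing first; then argue that $q_{k+1}$ must already be active (participating) before $q_1$ returns --- otherwise $q_1$ finishes its $\IS^2$ before $q_{k+1}$ even begins, so $q_1$'s view cannot include $q_{k+1}$, contradicting that $q_{k+1}$ is in $q_1$'s carrier. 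Applying this to the moment just before $q_1$ returns, all of $q_1, \dots, q_{k+1}$ are simultaneously active, giving $k+1$ concurrently active processes and contradicting $k$-concurrency.

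The main obstacle is handling the \emph{indirect} visibility carefully. A process $p$'s carrier can contain $p'$ not because $p$ saw $p'$ directly, but because some third process $p''$ that $p$ saw (in round $1$ or round $2$) had already seen $p'$. So ``$p'$ is in the carrier of $p$'' only guarantees a chain of observations, not that $p$ and $p'$ overlapped directly. I need to argue that even through such chains, if $q_{k+1}$ had not started before $q_1$ returned, then no process $q_1$ observes (directly or transitively, within the two rounds) could have observed $q_{k+1}$ either --- because all of those observations are causally in $q_1$'s past, hence before $q_1$'s return, hence before $q_{k+1}$ started. This is a causality/happens-before argument: the set of events $q_1$ depends on by the time it returns forms a down-closed set in the happens-before order, and if $q_{k+1}$'s first step is not in that set, then $q_{k+1} \notin \carrier(q_1,\sigma)$. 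Formalizing this requires relating the IS containment/immediacy properties to the happens-before relation of the AS run simulating two IS rounds; once that bridge is in place, the counting argument is immediate. I would also note the (easy) direction that the output does form a simplex of $\Chr^2\s$ at all --- this is just the standard fact that any $\IS^2$ run yields a simplex of $\Chr^2\s$, so the only content is the $\RK$ membership, i.e., the contention bound.
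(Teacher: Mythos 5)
Your proposal is correct and follows essentially the same route as the paper: a contention set must consist of processes that were simultaneously active (a process that completes both IS rounds before another starts cannot have that other process in its carrier), so $k$-concurrency bounds contention sets by $k$. Your ordering-by-return-time argument, which exhibits the explicit instant (just before the first member of $S$ returns) at which all $k+1$ processes would be active, and your causality remark about indirect visibility, merely make explicit two steps that the paper's shorter proof leaves implicit (the passage from pairwise overlap to simultaneous activity, and the fact that transitive observation is confined to a process's causal past).
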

\begin{proof}
Let us consider a set of $\IStwo$ outputs provided by a $k$-concurrent
execution of any IS algorithm (e.g., \cite{BG93a}): 
the set of $\IStwo$ outputs forms a valid simplex $\sigma$ in
$\Chr^2\s$~\cite{BG97}, as the set of $k$-concurrent runs is a subset of the
wait-free runs. Let us consider a contention set $S$ containing
two processes $p$ and $q$, and let us assume that $p$ and $q$ were
never executed concurrently during their executions of the two rounds
of immediate snapshots. Without loss of generality, we can consider
that $p$'s computation was terminated before the activation of $q$, so
$p$ cannot be aware of $q$'s input as it did not perform any operation
before $p$ finishes the two rounds of immediate snapshots. Thus, $p$
cannot see $q$, which contradicts the contention set
definition. Therefore all processes in a contention set were
\emph{active} at the same time during the execution, hence a
$k$-concurrent execution implies that contention sets cannot contain more than $k$ processes.
\end{proof}

It is easy to complete this result by showing that the $k$-set consensus model is, regarding task solvability, at least as strong as the $\RK^*$ model:

\begin{theorem}\label{thm:RK->kset}
Any task solvable by $\RK^*$ can be solved in the $k$-set consensus model.
\end{theorem}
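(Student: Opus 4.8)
The plan is to chain together the results already established in this section. First I would invoke Lemma~\ref{lem:SimplexAgreementRK}: a $k$-concurrent execution of two rounds of any immediate snapshot algorithm solves the simplex agreement task on $\RK$. By iterating this solution---each iteration consisting of two $k$-concurrent IS rounds, with the output of one iteration fed as the input of the next---we obtain a $k$-concurrent algorithm that solves the iterated simplex agreement task on $\RK^m$ for every $m$, hence implements $\RK^*$. Consequently, any task $T$ solvable by $\RK^*$ is solvable $k$-concurrently: simply run the $\RK^*$-solution on top of the $k$-concurrent implementation of $\RK^*$, composing the two algorithms.

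Next I would appeal to Lemma~\ref{lem:kSet->kConc}, which states that every task solvable in the $k$-concurrency model can be solved in the $k$-set consensus model. Composing the two implications---$\RK^*$-solvable $\Rightarrow$ $k$-concurrently solvable $\Rightarrow$ $k$-set-consensus solvable---yields the theorem.

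The one point that needs a little care, rather than a genuine obstacle, is the passage from ``solving one instance of simplex agreement on $\RK$'' to ``implementing $\RK^*$'', i.e.\ verifying that the iterated composition preserves $k$-concurrency and produces outputs forming a simplex of $\RK^m$. This is because $k$-concurrency is a property of the whole run: one has to check that sequentially composing finitely many $k$-concurrent simplex-agreement sub-protocols (where each process, upon finishing iteration $j$, immediately starts iteration $j+1$ with its new input) keeps the concurrency level bounded by $k$ throughout, and that the carrier/contention-set structure composes correctly across iterations so that the final output lies in $\RK^m$. Since the set of $k$-concurrent runs is closed under such sequential composition and the $\RK$-membership argument of Lemma~\ref{lem:SimplexAgreementRK} applies verbatim to each two-round block, this goes through routinely; the bulk of the theorem is a direct corollary of the two lemmas.
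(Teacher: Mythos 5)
Your proposal is correct and follows essentially the same route as the paper: both chain Lemma~\ref{lem:SimplexAgreementRK} with Lemma~\ref{lem:kSet->kConc} and iterate the simplex-agreement solution to obtain $\RK^*$ (the paper transfers the single $\RK$ task to the $k$-set-consensus model before iterating, whereas you iterate $k$-concurrently and then transfer the whole task, but this is an inessential reordering). Your added remark about checking that sequential composition preserves $k$-concurrency and $\RK^m$-membership is a reasonable point of care that the paper's own proof glosses over.
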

\begin{proof}
As shown in Lemma~\ref{lem:SimplexAgreementRK}, the simplex agreement
task on $\RK$ is solvable in the $k$-concurrency model. Moreover,
according to
Lemma~\ref{lem:kSet->kConc}, any task solvable in the $k$-concurrency
model can be solved in the $k$-set consensus model, and hence in particular, the
simplex agreement task on $\RK$. Therefore, by iterating a solution to
the simplex agreement task on $\RK$, a run of $\RK^*$ can be simulated
in the $k$-set consensus model and
used to solve any task solvable in $\RK^*$.
\end{proof}

\subsection{From $\RK^*$ to $k$-set consensus}
\label{subsec:RK->Kset}
Now we show how to simulate in $\RK^*$ any algorithm that uses read-write memory
and $k$-set-consensus objects.

\myparagraph{$k$-set consensus simulation design.}
Making a \emph{non-blocking} simulation of read-write memory can be
trivially done in $\RK^*$, since the set of $\RK^*$ runs is a subset
of $\IS^*$ runs, and there exists several algorithms simulating read-write memory
in $\IS^*$, e.g.,~\cite{GR10-opodis}. 

Solving $k$-set agreement is also not very complicated: every
iteration of $\RK$ provides a set of at most $k$ \emph{leaders}, i.e.,
processes with an \ISone output containing at most $k$ elements, where
at least one such \emph{leader} is visible to every process, i.e., it
can be identified as a \emph{leader} and its input is visible to
all. The set of leaders of $\mathcal{R}_2$ are shown in 
figure~\ref{fig:leader} in red, it is easy to observe that every simplex 
in $\mathcal{R}_2$ has at most two leaders, and that one is visible to 
every process (every process with a carrier of size at most 2 is a 
leader). This property gives a very simple $k$-set agreement algorithm:
every process decides on the value proposed by one of these $k$
leaders. We will later show how this property can be derived
from the restriction of $\RK$ on the size of \emph{contention sets}.

The difficulty of the simulation consists mostly in combining the
shared-memory and $k$-set agreement simulation, as some processes may
be accessing distinct agreement objects while other processes are
performing read-write operations. Indeed, liveness of our
$k$-set-agreement algorithm relies on the \emph{participation} of
visible leaders, i.e., on the fact that the leaders propose values for
this instance of $k$-set agreement. In this sense,  our $k$-set
agreement algorithm may block if some process is performing a
read-write operation or is involved in a different instance of $k$-set
agreement. Likewise, the read-write memory simulation is only
\emph{non-blocking}, so it can be indefinitely blocked by a process
waiting to complete an agreement  operation.

The solution we propose consists in (1)~synchronizing the two simulations,
in order to ensure that, eventually, at least one process will
complete its pending operation, and (2)~ensuring that the processes
collaborate by participating in every simulated operation. In our
solution, every process tries to propagate every observed proposed
value (for a write operation), and every process tries to reach an
agreement in every $k$-set-agreement object accessed by some process.
For that, we make the processes participate in both simulation protocols
(read-write and $k$-set agreement) in every round, until they decide. 

Even though the simulated algorithm executes only one
operation at a time and requires the output of the previous operation
to compute the input for the following one, we enrich the
simulated process with \emph{dummy} operations that do not alter the
simulation result. Then eventually some \emph{undecided} process is
guaranteed to complete both pending operations, where at most one of
them is a \emph{dummy} one. This scheme provides a \emph{non-blocking}
simulation of any algorithm using read-write shared memory and $k$-set
agreement objects.

\begin{figure}[t]
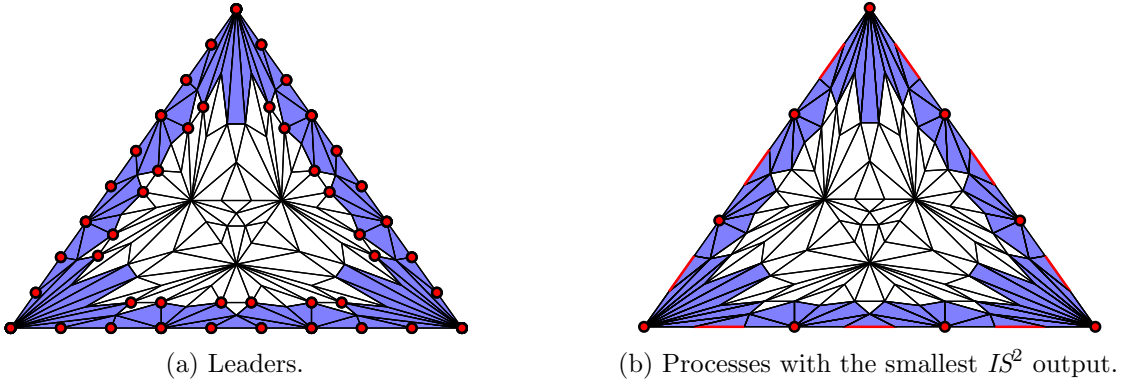

\captionsetup[subfigure]{justification=centering}
  \begin{minipage}[b]{.49\linewidth}
    \centering
		\includegraphics[scale=1.2]{R2_Leaders}
    \subcaption{Leaders.}\label{fig:leader}
  \end{minipage}
  \hfill
  \begin{minipage}[b]{.49\linewidth}
    \begin{center}
    		\includegraphics[scale=1.2]{R2_SmallestIS2Views}
    \subcaption{Processes with the smallest $\IS^2$ output.}\label{fig:smallestIS2snap}
    \end{center}
  \end{minipage}
  \caption{$\R_2$ for $3$ processes: (a)~leaders --- vertices in red, and (b)~processes with the
    smallest $\IS^2$ output --- simplices in red.}
  \label{fig:LeaderMinSnap}
\end{figure}

The shared memory simulation from~\cite{GR10-opodis} provides progress
to the processes with the smallest snapshot output, while our $k$-set
agreement algorithm provides progress to the 
leader with the smallest $\RK$ output, i.e., the smallest $\IS^2$ output.
We leverage these properties by running the read-write simulation
only on the outputs of $\RK$  (i.e., in every second round of
immediate snapshots). In the $2$-dimensional case, the set of
processes with the smallest $\IS^2$ outputs are presented as red
simplicies in Figure~\ref{fig:smallestIS2snap} for $\mathcal{R}_2$.
This way we guarantee that
at least the leader with the smallest $\RK$ output
will make progress in both simulations.
Indeed, the definition of $\RK$ implies that 
the set of processes with the smallest $\RK$ outputs includes a leader. 
Figure~\ref{fig:LeaderMinSnap} gives an example of
an intersection between the set of processes with the smallest 
$\IS^2$ output and the set of leaders: here every process
with the smallest $\IS^2$ output has a carrier of size at most $2$ and every such process is a leader.

\myparagraph{$k$-set consensus simulation algorithm.}
Algorithm~\ref{Alg:K-cons} provides a simulation of any 
algorithm using read-write shared memory (w.l.o.g., atomic snapshots)
and $k$-set-agreement objects. The algorithm is based on the shared
memory simulation from~\cite{GR10-opodis}, applied on \IStwo outputs
of every iteration of $\RK$,
combined with a parallel execution of instances of our $k$-set
agreement algorithm.
The simulation works in rounds that can be decomposed into three stages:
communicating through $\RK$, updating local information, and validating progress.  

\begin{algorithm}[t]
 \caption{k-set consensus simulation in $\RK^*$: process $i$\label{Alg:K-cons}}
 \begin{small}
\SetKwRepeat{Repeat}{Repeat forever}{End repeat}%
$\mathbf{Init}$: $r_i \leftarrow 0; \textit{State}_i \leftarrow undecided; \textit{ConsId}_i\leftarrow \bot;\textit{ConsProp}_i\leftarrow \bot$\;
$\textit{WriteVal}_i[i] \leftarrow \mathbf{FirstWrite_i()};\textit{WriteCount}_i[i]\leftarrow 1$\;
\lForEach{$m \in \{1,\dots,n\}\setminus \{i\}$}{$(\textit{WriteCount}_i[m],\textit{WriteVal}_i[m]) \leftarrow (0,\bot)$} 
$ConsHistory_i \leftarrow \emptyset
\mathbf{\ :\ List\ of\ adopted\ agreement\ proposals}$\;

\vspace{1em}

\Repeat{}{
	$r_i \leftarrow r_i+1$;$\textit{Leaders}_i\leftarrow true$\;				
	$\IStwo_{output} = \IStwo[r_i](\textit{State}_i,(\textit{WriteCount}_i,\textit{WriteVal}_i),ConsHistory_i)$\;\label{Alg:K-cons:l:RK}

	\vspace{1em}	
	
	\ForEach{$(j,View_j) \in \IStwo_{output}$}{\label{Alg:K-cons:l:UpdateMin}
		$\mathbf{Let\ } (\textit{State}_j,(\textit{WriteCount}_j,\textit{WriteVal}_j),ConsHistory_j) \leftarrow \mathbf{RKInput}(j)$\;\label{Alg:K-cons:l:SelectInput}
		\ForEach{$m \in \{1,\dots,n\}$} {\label{Alg:K-cons:l:UpdateWriteMin}
			\If{$\textit{WriteCount}_j[m]>\textit{WriteCount}_i[m]$}{
				$\textit{WriteCount}_i[m]=\textit{WriteCount}_j[m],\textit{WriteVal}_i[m]=\textit{WriteVal}_j[m]$\;
			}
		}\label{Alg:K-cons:l:UpdateWriteMax}
			\If{$|\mathbf{Undecided}(View_j)|\leq k$} {\label{Alg:K-cons:l:Leaders}
			\lIf{$\nexists (\textit{ConsId}_i,*) \in ConsHistory_j$}{$\textit{Leaders}_i \leftarrow false$}
			\ForEach{$(A_{id},A_{val})\in ConsHistory_j$}{
				$\mathbf{ReplaceOrAdd\ } (A_{id},*) \mathbf{\ in\ } ConsHistory_i \mathbf{\ with\ } (A_{id},A_{val})$\;\label{Alg:K-cons:l:AdoptLeader}
			}
			}
	}	\label{Alg:K-cons:l:UpdateMax}	
	
	\vspace{1em} 
	
	\If{$(\Sigma_{m\in\{1,\dots,n\}} WritesCount_i[m])=r_i$ }{\label{Alg:K-cons:l:testCount}
		\If{$\mathbf{PendingWriteSnapshotOperation()}$} { 
		 	$\mathbf{TerminateWriteOperation}(\textit{WriteVal}_i)$\;\label{Alg:K-cons:l:ValidateWrite}
		}
		\If{$\textit{Leaders}_i\wedge \textit{ConsId}_i\neq\bot$}{\label{Alg:K-cons:l:Decide}
			$\textit{ConsProp}_i \leftarrow A_{val}  \mathbf{\ where\ }  (\textit{ConsId}_i,A_{val})\in ConsHistory_i$\;
			$\mathbf{TerminateAgreementOperation}(A_{val})$; $\textit{ConsId}_i\leftarrow \bot$\;
		}\label{Alg:K-cons:l:DecideMax}
		\lIf{$\mathbf{Terminated}()$}	{$\textit{State}\leftarrow decided$}\label{Alg:K-cons:l:Terminate}
		\Else{\label{Alg:K-cons:l:NewOpMin}
			$\textit{WriteCount}_i[i] \leftarrow \textit{WriteCount}_i[i] +1$\;\label{Alg:K-cons:l:Increment}
			\If{$\mathbf{NextAgreementOperation()}=\mathbf{Available}$} {
				$(\textit{ConsId}_i,\textit{ConsProp}_i)\leftarrow \mathbf{NextAgreement_i()}$\;\label{Alg:K-cons:l:NewAgreement}
				\If{$(\nexists (A_{id},A_{val}) \in ConsHistory_i \mathbf{\ with\ } A_{id} = \textit{ConsId}_i)$}{ \label{Alg:K-cons:l:PreventingAdopt}	
					$\mathbf{Add\ } (\textit{ConsId}_i,\textit{ConsProp}_i) \mathbf{\ in\ } ConsHistory_i$\;\label{Alg:K-cons:l:AdoptFirst}
					}
			}
			\If{$\mathbf{NextWriteSnapshotOperation()}=\mathbf{Available}$}{
				$\textit{WriteVal}_i[i] \leftarrow \mathbf{NextWrite_i()}$\;
			}\label{Alg:K-cons:l:NewOpMax}
		}
	}
}
\end{small}
\end{algorithm}

The first stage consists in accessing the new $\RK$ iteration
associated with the round, using information on the ongoing operations as
an input (see line~\ref{Alg:K-cons:l:RK}). For memory operations, two
objects are contained in $\RK$'s input, an array containing the most
recent known write operations for every process, $\textit{WriteVal}_i$, and a
timestamp associated with each process write value, $\textit{WriteCount}_i$. A
single object is used for the agreement operations, $\textit{ConsHistory}_i$, a
list of all adopted proposals for all accessed agreement
objects. Finally, a value $\textit{State}$, set to $\textit{decided}$ or $\textit{undecided}$, is
put in $\RK$'s input, to indicate whether the process has completed its simulation.

The second stage consists in updating the local information according
to the output obtained from $\RK$ (lines~\ref{Alg:K-cons:l:UpdateMin}--\ref{Alg:K-cons:l:UpdateMax}). The input value of each process
observed in the second immediate snapshot of $\RK$ is extracted
(line~\ref{Alg:K-cons:l:SelectInput}). These selected inputs are
examined in order to replace the local write values $\textit{WriteCount}_i$ with the most recent ones, i.e., associated
with the largest write counters 
(lines~\ref{Alg:K-cons:l:UpdateWriteMin}--\ref{Alg:K-cons:l:UpdateWriteMax}).The
$ConsHistory$ variable of every leader, i.e., a process with an
$\ISone$ output containing at most $k$ \emph{undecided} process inputs
(using the variable $\textit{State}$), is scanned in order to adopt all its
decision estimates
(lines~\ref{Alg:K-cons:l:Leaders}--\ref{Alg:K-cons:l:AdoptLeader}). Moreover,
$\textit{Leaders}_i$ boolean value is used to check if every observed leader
transmitted a decision estimate for the pending agreement operation, $\textit{ConsId}_i$.

The third stage consists in checking whether pending operations can
safely be terminated (lines~\ref{Alg:K-cons:l:testCount}--\ref{Alg:K-cons:l:DecideMax}), and if so, whether the process has
completed its simulation (line~\ref{Alg:K-cons:l:Terminate}) or if new
operations can be initiated (line~\ref{Alg:K-cons:l:NewOpMin}--\ref{Alg:K-cons:l:NewOpMax}). 

Informally, it is safe for a process
to decide in line~\ref{Alg:K-cons:l:Decide}, as there are at most $k$
\emph{Leaders} per round, one of which is (1)~visible to every process
and (2)~provides a decision estimate for the pending agreement.
Thus every process adopts the decision estimate from a \emph{leader}
of the round, reducing the set of possible distinct decisions to
$k$.

A pending memory operation terminates when the round number
$r_i$ equals the sum of the currently observed write counters (test at
line~\ref{Alg:K-cons:l:testCount}), as in the original
algorithm~\cite{GR10-opodis}. Indeed, the equality
implies that the writes in the estimated snapshot have been observed by every
process (line~\ref{Alg:K-cons:l:ValidateWrite}). Last, if a
process did not terminate, it increments its write counter and, if
there is a new operation available, the process selects the operation (see
lines~\ref{Alg:K-cons:l:Increment}--\ref{Alg:K-cons:l:NewOpMax}).

If there is a new agreement operation, then the input proposal and the object
identifier are selected (line~\ref{Alg:K-cons:l:NewAgreement}) and they are
used for the current decision estimate in $\textit{ConsHistory}_i$
(line~\ref{Alg:K-cons:l:AdoptFirst}), unless a value has already been
adopted (line~\ref{Alg:K-cons:l:PreventingAdopt}). If there is a new
write operation then the current write value is simply changed
(line~\ref{Alg:K-cons:l:NewOpMax}), a \emph{dummy} write thus consists 
in re-writing the same value. \footnote{Note that our agreement algorithm is far from efficient
  for multiple reasons. Progress could be validated at every round
  and not only when a write is validated. Moreover, processes could
  also preventively decide the output for objects not yet
  accessed. Lastly, processes could also  adopt proposals from
  \emph{non-leaders} when no visible leader has a proposition.}

\begin{lemma}
In $\RK^*$, Algorithm~\ref{Alg:K-cons} provides a non-blocking simulation of any shared memory algorithm with access to $k$-set-agreement objects.\label{lem:KconsSim} 
\end{lemma}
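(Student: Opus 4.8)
The plan is to verify two properties of Algorithm~\ref{Alg:K-cons}: \emph{safety}, meaning the simulated run is a legal run of the read-write-plus-$k$-set-agreement algorithm $\A$, and \emph{non-blocking progress}, meaning that as long as some process is undecided, at least one process completes a pending operation and thus eventually decides. For safety I would argue operation by operation. For write/snapshot operations, the bookkeeping is essentially that of~\cite{GR10-opodis}: the pair $(\textit{WriteCount}_i,\textit{WriteVal}_i)$ tracks the most recent write of every process, these pairs only grow along $\IStwo$ views, and the test $\sum_m \textit{WriteCount}_i[m]=r_i$ at line~\ref{Alg:K-cons:l:testCount} certifies that every process has observed exactly the writes in the current estimated snapshot, so the terminated snapshot can be linearized consistently. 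The new ingredient is the agreement operations, and here the key invariant is that for each agreement object identifier $A_{id}$, the value stored in $ConsHistory$ for $A_{id}$ is always some proposal actually made to that object, and that no process ever terminates an agreement operation on $A_{id}$ with a value inconsistent with another process's termination — this follows because a process only terminates $\textit{ConsId}_i$ (lines~\ref{Alg:K-cons:l:Decide}--\ref{Alg:K-cons:l:DecideMax}) when $\textit{Leaders}_i$ is true, i.e.\ every \emph{leader} it saw in this round carried a proposal for $\textit{ConsId}_i$, and leaders agree on a common estimate because one of them is visible to all (property of $\RK$). I would then check that at most $k$ distinct values are returned across all processes for any single agreement object, using that each round has at most $k$ leaders and every process adopts the estimate of one of them; combined with the adopt-before-overwrite discipline of line~\ref{Alg:K-cons:l:AdoptLeader} and the ``adopt only if not yet adopted'' guard at line~\ref{Alg:K-cons:l:PreventingAdopt}, this caps the number of distinct outputs.

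For progress, the argument is the synchronization trick described informally before the algorithm. Suppose, for contradiction, that from some round on no undecided process ever passes the test at line~\ref{Alg:K-cons:l:testCount}, or passes it but never decides. Since $\RK^*$ runs are a subset of $\IS^*$ runs, in every round there is a nonempty set of processes with the smallest $\IStwo$ output, and by the defining contention-set restriction of $\RK$ this set contains a \emph{leader} — a process whose $\ISone$ view has at most $k$ undecided inputs and which is visible to all. For that smallest-$\IStwo$-output leader $p$: its write counter bookkeeping makes $\sum_m \textit{WriteCount}_p[m]$ catch up to $r_p$ in a bounded number of rounds (the argument of~\cite{GR10-opodis}, since $p$ sees every write it needs), so $p$ passes line~\ref{Alg:K-cons:l:testCount}; and being a visible leader with its proposal transmitted to all, its pending agreement operation can be terminated at line~\ref{Alg:K-cons:l:Decide}. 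Because the simulated process is enriched with \emph{dummy} write operations that never alter the state, $p$ always has a write operation pending unless it has genuinely terminated; hence $p$ completes both its pending memory operation and (if any) its pending agreement operation, so $p$ advances its simulated execution or decides. This contradicts the assumption, establishing the non-blocking property.

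The main obstacle I expect is the interleaving of the two simulations: a process may be blocked on an agreement object while others do read-write work, and vice versa, so one must show that the single process $p$ selected by the ``smallest $\IStwo$ output'' criterion is simultaneously unblocked in \emph{both} sub-simulations in the same round. This is exactly what the ``run read-write simulation only on $\RK$ outputs'' design and the dummy-operation padding are engineered to guarantee — the smallest-output set for the memory simulation and the leader set for the agreement simulation are forced to intersect by the $\RK$ restriction (Figure~\ref{fig:LeaderMinSnap}) — so the crux of the write-up is formalizing that this intersection is always nonempty and that the chosen process genuinely makes both a write-snapshot step and an agreement step available for termination in that round. Once that single-round claim is nailed down, the bounded-round catch-up argument and the contradiction go through routinely, and combining safety with non-blocking progress yields the lemma.
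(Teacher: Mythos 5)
Your proposal is correct and follows essentially the same route as the paper's proof: it decomposes the argument into read-write safety via the linearization bookkeeping of~\cite{GR10-opodis}, agreement safety via the leader-visibility property derived from the contention-set bound of $\RK$ together with the at-most-$k$-leaders count, and non-blocking progress via the smallest-$\IStwo$-output leader combined with dummy-operation padding. The only cosmetic slip is the phrase that ``leaders agree on a common estimate'' --- what is actually needed (and what you correctly use afterwards) is that every process adopts the estimate of \emph{some} leader and there are at most $k$ leaders, which bounds the decided values by $k$ rather than forcing agreement.
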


The proof of Lemma~\ref{lem:KconsSim} is delegated to
Appendix~\ref{app:Proof_KprocSim}. The main aspects of the proof are
taken from the base algorithm from~\cite{GR10-opodis}, while the
liveness of the agreement objects simulation relies on the restriction
provided by $\RK^*$ and the maximal size of \emph{contention
  sets}. 
  
  Lemma~\ref{lem:KconsSim} implies the following result:

\begin{theorem}
Any task solvable in the $k$-set-consensus model can be solved in $\RK^*$\label{thm:RKtoKcons} 
\end{theorem}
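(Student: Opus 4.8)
The plan is to derive Theorem~\ref{thm:RKtoKcons} directly from Lemma~\ref{lem:KconsSim} by the same iterated-simulation bootstrapping that was used in Theorem~\ref{thm:RK->kset}, only in the opposite direction. Fix a task $T=(\I,\O,\Delta)$ and an algorithm $\A$ that solves $T$ in the $k$-set-consensus model, i.e., in every run of the AS model enhanced with $k$-set-agreement objects (recall that $k$-set consensus and $k$-set agreement are equivalent under read-write memory, and that Algorithm~\ref{Alg:K-cons} is phrased in terms of $k$-set-agreement objects). By Lemma~\ref{lem:KconsSim}, Algorithm~\ref{Alg:K-cons} gives a non-blocking simulation of $\A$ in $\RK^*$: every process repeatedly invokes instances of $\RK$ (each instance being two IS rounds) and, along the way, feeds the simulation the states, write values and agreement histories of the simulated processes. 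The first step is therefore to state precisely what "non-blocking simulation" buys us: in every infinite $\RK^*$ run, infinitely many simulated steps of $\A$ are produced, and any simulated execution thus generated is a legal run of $\A$ in the $k$-set-consensus model (this is what the four linearization claims listed after Lemma~\ref{lem:KprocSim}, together with the leader/contention-set argument for the agreement objects, establish).

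The second step is to turn non-blocking into the per-process output guarantee required by the definition of solvability in $\RK^*$. Here I would use the standard fairness argument for iterated models: in an $\RK^*$ run, although a "slow" process may stop being seen by the others, it still takes infinitely many steps, and in each of its own rounds it observes the most up-to-date shared state it can reach. The simulation being non-blocking means that at every round at least one \emph{undecided} simulated process completes its pending operation(s); combined with the fact that $\A$ guarantees an output to every correct process, and that there are only finitely many processes and finitely many simulated steps before all of them decide in $\A$, after finitely many $\RK$ iterations a task output for every simulated process is available in the shared memory. Since each real process keeps iterating $\RK$ and propagating everything it sees, every real process eventually reads the output assigned to its own color and decides. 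Validity is inherited: the decided vector $O$ is exactly the output vector of the simulated run of $\A$ on input $I$, and since that run is legal, $(I,O)\in\Delta$.

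Concretely, the proof I would write is short: "By Lemma~\ref{lem:KconsSim}, iterating Algorithm~\ref{Alg:K-cons} on instances of $\RK$ yields, in every $\RK^*$ run, a non-blocking simulation of $\A$; hence infinitely many simulated steps are taken, the simulated execution is a legal $k$-set-consensus run of $\A$, every simulated process eventually obtains a task output, and since every real process iterates $\RK$ forever and adopts the outputs it observes, every real process decides a value consistent with $\Delta$." I would also note that $\A$ without loss of generality is the full-information protocol for $T$ in the $k$-set-consensus model, so "reading the output of one's own simulated process" is well defined, and that the \emph{dummy}-operation trick in Algorithm~\ref{Alg:K-cons} is what makes the non-blocking property survive the fact that $\A$ alternates read-write and agreement operations.

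The main obstacle is the per-process termination argument, i.e., upgrading non-blocking to "every process decides": one must argue that the non-blocking progress cannot be monopolized forever by a shrinking set of processes in a way that starves some process of its output, and that the iterated-memory "slowness" phenomenon does not prevent a slow process from eventually reading its own already-computed output. This is exactly the point the introduction flags as the novelty over prior simulations~\cite{HR10,GR10-opodis,GK11,BGK14,GKM14-podc}; the crux is that outputs, once written, persist in the simulated shared memory and a slow process's own later rounds will observe them, so termination follows from $\A$'s termination plus finiteness, not from being "seen" by fast processes. All the heavy lifting for this is inside Lemma~\ref{lem:KconsSim}; the theorem itself is then essentially a wrapper, so I would keep its proof to a few lines and point to the appendix for the details.
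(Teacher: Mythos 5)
Your proposal is correct and follows essentially the same route as the paper: invoke Lemma~\ref{lem:KconsSim} to get a non-blocking simulation of the given algorithm in $\RK^*$, then use non-blocking progress plus the finiteness of the process set to conclude that every process eventually decides. The only (harmless) deviation is your framing of termination as each process ``reading the output assigned to its own color'' from shared memory --- in Algorithm~\ref{Alg:K-cons} each process directly simulates its own code and decides when its own $\mathbf{Terminated}()$ test passes --- but the underlying induction (at every point some undecided process makes progress, hence the undecided set shrinks to empty) is exactly the paper's argument.
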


\begin{proof}
To solve in $\RK^*$ a task solvable in the $k$-set-consensus model, we can
simply use Algorithm~\ref{Alg:K-cons}, simulating any given algorithm
solving the task in the $k$-set-consensus model.

The non-blocking simulation provided by Algorithm~\ref{Alg:K-cons}
ensures, at each point, that at least one live process eventually terminates.
As there are only finitely many processes, every live process eventually terminates. 
\end{proof}
Lemma~\ref{lem:kSet->kConc}, Theorem~\ref{thm:RK->kset}, and
Theorem~\ref{thm:RKtoKcons} imply the following equivalence result: 

\begin{corollary}
The $k$-concurrency model, the $k$-set-consensus model, and $\RK^*$ are equivalent regarding task solvability.
\label{cor:equiv}
\end{corollary}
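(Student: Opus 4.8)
The corollary is essentially a bookkeeping step: it asserts that the three models --- $k$-concurrency, $k$-set-consensus, and $\RK^*$ --- all solve exactly the same set of tasks, given the three ingredients cited just above it. The plan is simply to close the cycle of implications using the lemmas and theorems already proved. Concretely, I would argue three inclusions in a round-robin fashion: (i) every task solvable $k$-concurrently is solvable in the $k$-set-consensus model, which is exactly Lemma~\ref{lem:kSet->kConc}; (ii) every task solvable in the $k$-set-consensus model is solvable in $\RK^*$, which is exactly Theorem~\ref{thm:RKtoKcons}; and (iii) every task solvable in $\RK^*$ is solvable in the $k$-set-consensus model, which is Theorem~\ref{thm:RK->kset}, whose proof in turn routes through the $k$-concurrency model via Lemma~\ref{lem:SimplexAgreementRK}. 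Chaining these, one gets $k\text{-conc} \Rightarrow k\text{-set} \Rightarrow \RK^* \Rightarrow k\text{-set} \Rightarrow k\text{-conc}$ (the last step being the already-known-and-reproved direction $k$-set $\Rightarrow$ $k$-concurrency, i.e.\ the standard Borowsky--Gafni-style BG-simulation of a $k$-concurrent execution by $k+1$ processes running $k$-set-consensus, or more directly the fact that the $k$-concurrency model is a restriction of the wait-free model augmented with $k$-set consensus), so all three solvability classes coincide.

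The one subtlety worth spelling out --- and the place where the ``proof'' is more than a citation --- is that the excerpt's Theorem~\ref{thm:RK->kset} only gives $\RK^* \Rightarrow k\text{-set}$, and Lemma~\ref{lem:kSet->kConc} only gives $k\text{-conc} \Rightarrow k\text{-set}$; to close the loop one still needs the converse $k\text{-set} \Rightarrow k\text{-conc}$ to pull $k$-concurrency back into the equivalence class. This is the direction stated (and, per the footnote, proved self-containedly) as part of the $k$-concurrency/$k$-set-consensus equivalence from~\cite{GG09}: a $k$-set-consensus (equivalently $k$-simultaneous-consensus) object together with read-write memory can be used to restrict concurrency to $k$, e.g.\ by the generalized-state-machine / $k$-process-simulation construction of Algorithm~\ref{Alg:K-conc} combined with the extended BG-simulation as described in Section~\ref{subsec:Kset<->Kconc}. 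Once that missing arrow is in hand, the three pairwise implications needed for equivalence follow formally, and there is nothing left to compute.

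I would therefore write the proof as: ``By Lemma~\ref{lem:kSet->kConc}, every task solvable in the $k$-concurrency model is solvable in the $k$-set-consensus model; conversely, as recalled in Section~\ref{subsec:Kset<->Kconc} (following~\cite{GG09}), every task solvable in the $k$-set-consensus model is solvable $k$-concurrently, so these two models are equivalent. By Theorem~\ref{thm:RKtoKcons}, every task solvable in the $k$-set-consensus model is solvable in $\RK^*$, and by Theorem~\ref{thm:RK->kset}, every task solvable in $\RK^*$ is solvable in the $k$-set-consensus model; hence $\RK^*$ is equivalent to the $k$-set-consensus model as well. Transitivity of task-solvability equivalence yields the claim.'' I do not expect any real obstacle here --- the corollary is a corollary in the genuine sense --- the only thing to be careful about is not to omit the $k\text{-set}\Rightarrow k\text{-conc}$ direction, since without it the stated lemma/theorem list gives a cycle through $k$-set and $\RK^*$ but leaves $k$-concurrency only as a source, never a sink.

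\begin{proof}
We show that the three models solve exactly the same set of tasks by
establishing the relevant implications.

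First, the $k$-concurrency model and the $k$-set-consensus model are
equivalent. By Lemma~\ref{lem:kSet->kConc}, every task solvable in the
$k$-concurrency model is solvable in the $k$-set-consensus model.
Conversely, as recalled in Section~\ref{subsec:Kset<->Kconc} (following
the claim of~\cite{GG09} and the self-contained argument given there
via Algorithm~\ref{Alg:K-conc} and the extended BG-simulation), a
read-write shared-memory system equipped with $k$-set-consensus objects
can simulate a $k$-concurrent execution of any algorithm; hence every
task solvable in the $k$-set-consensus model is solvable
$k$-concurrently.

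Second, the $k$-set-consensus model and $\RK^*$ are equivalent. By
Theorem~\ref{thm:RKtoKcons}, every task solvable in the
$k$-set-consensus model is solvable in $\RK^*$. Conversely, by
Theorem~\ref{thm:RK->kset}, every task solvable in $\RK^*$ is solvable
in the $k$-set-consensus model.

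Combining the two equivalences by transitivity, the $k$-concurrency
model, the $k$-set-consensus model, and $\RK^*$ all solve precisely the
same set of tasks.
\end{proof}
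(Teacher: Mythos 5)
Your overall plan is the paper's: the corollary is stated there with no proof beyond the remark that Lemma~\ref{lem:kSet->kConc}, Theorem~\ref{thm:RK->kset} and Theorem~\ref{thm:RKtoKcons} imply it, and you correctly observe that those three statements alone leave $k$-concurrency as a source but not a sink, so one more arrow ($k$-set-consensus $\Rightarrow$ $k$-concurrency) is needed. Two remarks. First, the justification you attach to that missing arrow is garbled: you write that ``a read-write shared-memory system equipped with $k$-set-consensus objects can simulate a $k$-concurrent execution,'' which is precisely the content of Lemma~\ref{lem:kSet->kConc} (Algorithm~\ref{Alg:K-conc} plus extended BG-simulation) and gives the \emph{forward} inclusion $k\text{-conc}\subseteq k\text{-set}$, not the converse you are trying to establish; the converse is the easy direction (in a $k$-concurrent run at most $k$ processes access an agreement object concurrently, so $k$-set consensus is implementable from registers), and the paper only asserts it in the Preliminaries as known from~\cite{GG09}. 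As written, your parenthetical is circular even though the conclusion it supports is true and citable. Second, you can avoid the external citation entirely: Lemma~\ref{lem:SimplexAgreementRK} shows the simplex agreement task on $\RK$ is solvable $k$-concurrently, so iterating it gives $\RK^*\subseteq k\text{-conc}$; chaining with Theorem~\ref{thm:RKtoKcons} ($k\text{-set}\subseteq\RK^*$) and Lemma~\ref{lem:kSet->kConc} ($k\text{-conc}\subseteq k\text{-set}$) closes the three-cycle using only results proved in the paper. Either fix is one line; with it your proof is complete and matches the intended argument.
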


\section{Concluding remarks: on minimality of $\Chr^2\s$ for $k$-set consensus}
\label{sec:disc}

\begin{wrapfigure}{r}{0.25\textwidth}
\center
\includegraphics[scale=0.8]{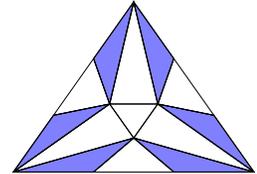}
\caption{\footnotesize Fully ordered sub-$\Chr\s$}
\label{fig:fullOrderCh1}
\end{wrapfigure}
This paper shows that the models of $k$-set consensus and
$k$-concurrency are captured by the same affine task $\RK$,   
%
defined as a subcomplex of $\Chr^2\s$.
One may wonder if there exists a simpler equivalent affine task, defined as a
subcomplex of $\Chr\s$, the $1$-degree of the standard chromatic subdivision. 
%
To see that this is in general not possible, consider the case of
$k=1$ (consensus) in a $3$-process system.
We can immediately see that the corresponding subcomplex of $\Chr\s$
must contain all ``ordered'' simplexes depicted in
Figure~\ref{fig:fullOrderCh1}.
Indeed, we must account for a wait-free $1$-concurrent~$\IS^1$
run in which, say, $p_1$ runs first until it completes (and it must
outputs its corner vertex in $\Chr\s$), then $p_2$ runs alone until it
outputs its vertex in the interior of the  face $(p_1,p_2)$ and, finally, $p_3$
must output its interior vertex.    

The derived complex is connected. 
Moreover, any number of its iterations still results in a connected
complex. The simple connectivity  argument implies that consensus
cannot be solved in this iterated model and, thus, the complex cannot
capture \mbox{$1$-concurrency}.  

Interestingly, the complex in Figure~\ref{fig:fullOrderCh1} precisely
captures the model in which, instead of consensus, weaker
\emph{test-and-set} (TS) objects are used: (1)~using TS, one easily make
sure that at most one process terminates at an $\IS$ level, and
(2)~in $IS$ runs defined by this subcomplex, 
any pair of processes can solve consensus using this complex and, thus, a TS object
can be implemented.   
It is not difficult to generalize this observation to
\emph{$k$-TS} objects~\cite{MRT06}: the corresponding complex consists of
all simplices of $\Chr\s$, contention sets of which are of size at
most $k$. The equivalence (requiring a simple generalization
for the backward direction) can be found in~\cite{MRT06,GRT07}.

Overall, this raises an intriguing question whether every object, when used in the
read-write system, can be captured via a subcomplex of $\Chr^m\s$ for
some $m\in\Nat$.




\def\noopsort#1{} \def\No{\kern-.25em\lower.2ex\hbox{\char'27}}
  \def\no#1{\relax} \def\http#1{{\\{\small\tt
  http://www-litp.ibp.fr:80/{$\sim$}#1}}}


\newpage
\appendix 

\section{Simplicial complexes}
\label{app:topprimer}

We review now several notions from topology. For more detailed coverage of the topic please refer
to~\cite{Spanier,HKR14}.

A {\em simplicial complex} is a set $V$, together with a collection $C$ of finite non-empty subsets of $V$ such
that:
\begin{enumerate}
\item For any $v \in V$, the one-element set $\{v\}$ is in $C$;
\item If $\sigma \in C$ and $\sigma' \subseteq \sigma$, then $\sigma' \in C$.
\end{enumerate}

The elements of $V$ are called {\em vertices}, and the elements of $C$ are called {\em simplices}. We usually
drop $V$ from the notation, and refer to the simplicial complex as $C$.


A subset of a simplex is called a {\em face} of that simplex.

A {\em subcomplex} of $C$ is a subset of $C$ that is also a simplicial complex.

The {\em dimension} of a simplex $\sigma \in C$ is its cardinality minus one. The $k$-skeleton of a complex $C$,
denoted $\Skel^k C$, is the subcomplex formed of all simplices of $C$ of dimension $k$ or less.

A simplicial complex $C$ is called {\em pure} of dimension $n$ if $C$ has no simplices of dimension $> n$, and
every $k$-dimensional simplex of $C$ (for $k < n$) is a face of an $n$-dimensional simplex of $C$.


Let $A$ and $B$ be simplicial complexes. A map $f: A \to B$ is called {\em simplicial} if it is induced by a map
on vertices; that is, $f$ maps vertices to vertices, and for any $\sigma \in A$, we have $$ f(\sigma) =
\bigcup_{v \in \sigma} f(\{v\}).$$ A simplicial map $f$ is called {\em non-collapsing} (or {\em
dimension-preserving}) if $\dim f(\sigma) = \dim \sigma$ for all $\sigma \in A$.

Any simplicial complex $C$ has an associated {\em geometric realization} $|C|$, defined as follows: Let $V$ be
the set of vertices in $C$. As a set, we let $C$ be the subset of $[0,1]^V = \{\alpha : V \to [0,1]\}$
consisting of all functions $\alpha$ such that $\{ v \in V \mid \alpha(v) > 0 \} \in C$ and $\sum_{v \in V}
\alpha(v) = 1$.
For each $\sigma \in C$, we set $|\sigma| = \{ \alpha \in |C| \mid \alpha(v) \neq 0 \Rightarrow v \in \sigma
\}.$ Each $|\sigma|$ is in one-to-one correspondence with a subset of $\R^n$ of the form $\{(x_1, \dots, x_n) \in
[0,1]^n \mid \sum x_i = 1\}.$ We
put a metric on $|C|$ by $d(\alpha, \beta) = \sum_{v \in V} |\alpha(v) - \beta(v)|.$ 


A non-empty complex $C$ is called {\em $k$-connected} if, for each $m\leq k$, any continuous map of the
$m$-sphere into $|C|$ can be extended to a continuous map over the $(m+1)$-disk.

A {\em subdivision} of a simplicial complex $C$ is a simplicial complex $C'$ such that:
\begin{enumerate}
\item The vertices of $C'$ are points of $|C|$.

\item For any $\sigma' \in C'$, there exists $\sigma \in C$ such that $\sigma' \subset |\sigma|$.

\item The piecewise linear map $|C'| \to |C|$ mapping each vertex of $C'$ to the corresponding point of $C$ is a
homeomorphism.
\end{enumerate}



\myparagraph{Chromatic complexes.}
We now turn to the chromatic complexes used in distributed computing, and recall some notions from \cite{HS99}.

Fix $n \geq 0$. The {\em standard $n$-simplex} $\s$ has $n+1$ vertices, in one-to-one correspondence with $n+1$
{\em colors} $0, 1, \dots, n$. A face $\t$ of $\s$ is specified by a collection of vertices from $\{0, \dots,
n\}$. We view $\s$ as a complex, with its simplices being all possible faces $\t$.

A {\em chromatic complex} is a simplicial complex $C$ together with a non-collapsing simplicial map $\chi: C \to
\s$. Note that $C$ can have dimension at most $n$. We usually drop $\chi$ from the notation. We write $\chi(C)$
for the union of $\chi(v)$ over all vertices $v \in C$. Note that if $C' \subseteq C$ is a subcomplex of a
chromatic complex, it inherits a chromatic structure by restriction.

In particular, the standard $n$-simplex $\s$ is a chromatic complex, with $\chi$ being the identity.

Every chromatic complex $C$ has a {\em standard chromatic subdivision} $\Chr C$. Let us first define $\Chr \s$
for the standard simplex $\s$. The vertices of $\Chr \s$ are pairs $(i, \t)$, where $i \in \{0,1 ,\dots, n\}$
and $\t$ is a face of $\s$ containing $i$. We let $\chi(i, \t) = i$. Further, $\Chr \s$ is characterized by its
$n$-simplices; these are the $(n+1)$-tuples $((0,\t_0), \dots, (n, \t_n))$ such that:
\begin{enumerate}[(a)]
\item For all $\t_i$ and $\t_j$, one is a face of the other;
\item If $j \in \t_i$, then $\t_j \subseteq \t_i$. 
\end{enumerate} 
The geometric realization of $\s$ can be taken to be the set $\{\x=(x_0, \dots, x_n) \in [0,1]^{n+1} \mid \sum
x_i = 1\},$ with the vertex $i$ corresponding to the point $\x^i$ with $i$ coordinates $1$ and the other
coordinates $0$. Then, we can identify a vertex $(i, \t)$ of $\Chr \s$ with the point
\[
\frac{1}{2k-1} \x_i + \frac{2}{2k-1} \Bigl( \sum_{\{j \in \t \mid j \neq i\}} \x_j \Bigr) \ \in |\s| \subset
\R^{n+1},
\]
where $k$ is the cardinality of $\t$. 
Thus, $\Chr \s$ becomes a subdivision of $\s$ and the geometric realizations are identical: $|\s|=|\Chr \s|$. 

Next, given a chromatic complex $C$, we let $\Chr C$ be the subdivision of $C$ obtained by replacing each
simplex in $C$ with its chromatic subdivision. Thus, the vertices of $\Chr C$ are pairs $(p, \sigma)$, where $p$
is a vertex of $C$ and $\sigma$ is a simplex of $C$ containing $p$. If we iterate this process $m$ times we
obtain the $m^{th}$ chromatic subdivision, $\Chr^m C$.

Let $A$ and $B$ be chromatic complexes. A simplicial map $f: A \to B$ is called a {\em chromatic map} if for all
vertices $v \in A$, we have $\chi(v) = \chi(f(v))$. Note that a chromatic map is automatically non-collapsing. A
chromatic map has chromatic subdivisions $\Chr^m f: \Chr^m A \to \Chr^m B$. Under the identifications of
topological spaces $|A| \cong |\Chr^m A|, |B| \cong |\Chr^m B|,$ the continuous maps $|f|$ and $|\Chr^m f|$ are
identical.


\section{Omitted proofs}

\label{app:Proof_KprocSim}

{
\renewcommand{\thetheorem}{\ref{lem:KprocSim}}
\begin{lemma}
Algorithm~\ref{Alg:K-conc} provides a non-blocking simulation of a
$k$-process read-write shared-memory system in the $k$-set consensus
model.
Moreover, if there are $\ell<k$ active processes, then 
one of the first $\ell$ simulated processes is guaranteed to make progress.
\end{lemma}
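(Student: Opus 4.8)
The plan is to follow the standard recipe for proving a shared-memory simulation correct: first pin down a consistent \emph{simulated state}, then show the simulated operations admit a \emph{linearization}, and finally that \emph{some} simulated operation always eventually completes. The simulated content of the register of simulated process $m\in\{1,\dots,k\}$ is read off from the column $MEM[\cdot][m]$ by the function $\textit{CurWrites}$, which for each $m$ returns the value carrying the largest write counter. Everything rests on a safety invariant that I would establish first, by induction on the write counter $c$: for every simulated process $m$ and every counter $c$, all actual $MEM[\cdot][m].\textit{Update}(c,\cdot)$ operations write one and the same value, namely $\textit{WriteVal}(c,V(m,c))$ for a well-defined view $V(m,c)$. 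The base case $c=0$ is immediate, since every simulator starts with $(WC_i[m],\textit{View}_i[m])=(0,\emptyset)$ and the first simulated write of process $m$ is a deterministic function of its input. For the inductive step, observe that a simulator writes with counter $c$ only inside the commit branch, i.e.\ only after having committed a proposal of the form $(c,V)$ on some $CA[r][m]$; by the agreement property of commit-adopt, all values committed on a given object $CA[r][m]$ carry the same $V$, and I would then show that once $(c,V)$ is committed on $CA[r][m]$ this value is returned to every simulator that subsequently touches index $m$ and, thanks to the guard $c\ge WC_i[m]$ forbidding a staler overwrite, no proposal $(c,V')$ with $V'\neq V$ can ever be committed afterwards. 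Combined with the induction hypothesis --- which makes $V(m,c)$ itself unambiguous, as it is computed by $\textit{CurWrites}$ over a memory in which every column already holds a unique value per counter --- this yields the invariant, hence that each $MEM[\cdot][m]$ behaves exactly like the single-writer register of simulated process $m$ and that $\textit{CurWrites}$ always returns a consistent simulated snapshot.

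For progress I would show that in every round $r$ at least one simulator commits on its \emph{first} commit-adopt call and thereby advances some simulated process. Let $p_j$ be the first process to return from its first commit-adopt call of round $r$, namely $CA[r][a]$ with $a=\textit{Index}_j$. Any process $p_h$ that invoked $CA[r][a]$ strictly before $p_j$ returns must satisfy $\textit{Index}_h=a$: otherwise $p_h$'s first commit-adopt call of round $r$ is $CA[r][b]$ for some $b\neq a$, and $p_h$ would have had to return from $CA[r][b]$ before invoking $CA[r][a]$, contradicting the choice of $p_j$. Hence, by the $k$-simultaneous consensus guarantee that no two processes obtain different values at the same index, every process invoking $CA[r][a]$ before $p_j$ returns proposed the same value, so commit-adopt forces $p_j$ to commit; committing on $CA[r][a]$ puts $p_j$ in the commit branch for $m=a$, where it writes a fresh value and increments $WC_j[a]$. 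Since there are only finitely many simulators, this non-blocking property gives the first sentence of the lemma. The refinement follows from the same witness: if only $\ell<k$ simulators are active, then in every sufficiently large round $r$ at most $\ell$ distinct vectors are proposed to $KSC[r]$, so by the $k$-simultaneous consensus property only indices in $\{1,\dots,\ell\}$ can be output; therefore $\textit{Index}_j\le\ell$ and one of the first $\ell$ simulated processes progresses in that round --- and there are infinitely many such rounds, so one of the first $\ell$ simulated processes is guaranteed to make progress.

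Linearizability of the simulated operations is then routine: linearize the simulated write of process $m$ with counter $c$ at the first actual $MEM[\cdot][m].\textit{Update}(c,\cdot)$ (well defined by the safety invariant) and each simulated snapshot at the instant of the actual $MEM.\textit{Snapshot}()$ that produced it. A snapshot taken at real time $t$ sees, for each $m$, exactly the value with the largest counter $c^\ast$ written by time $t$, and the write with counter $c^\ast+1$ cannot have been linearized before $t$ --- otherwise its linearization point, an actual $\textit{Update}(c^\ast+1,\cdot)$, would precede $t$ and the snapshot would observe a counter exceeding $c^\ast$. Hence every simulated snapshot reflects precisely the simulated writes linearized before it, and by the safety invariant the per-process write sequences are those of the simulated deterministic protocol, so the $n$ simulators jointly produce a correct run of a $k$-process read-write system.

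The step I expect to be the main obstacle is the safety invariant, specifically ruling out the scenario in which a ``slow'' simulator, having computed some view $V'$ for $(m,c)$ before any value was committed for $(m,c)$, later commits $V'\neq V(m,c)$ once it finally reaches the round holding the relevant commit-adopt object. Showing this cannot happen relies on the one-shot object $CA[r][m]$ that the slow simulator is forced to traverse returning the already-committed $V(m,c)$, together with the monotonicity guard $c\ge WC_i[m]$ preventing a revert to the stale $V'$ --- which is exactly why the algorithm refuses to adopt values attached to earlier rounds of the simulation. The remaining parts are standard adaptations of the argument of~\cite{GR10-opodis} and of the folklore properties of commit-adopt and $k$-simultaneous consensus.
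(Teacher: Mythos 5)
Your proposal is correct and follows essentially the same route as the paper's proof: a per-$(m,c)$ uniqueness invariant for committed writes established via commit-adopt agreement and the counter guard, linearization of writes at the first actual $\textit{Update}$ and of snapshots at the actual $\textit{Snapshot}$, progress via the first simulator to return from its first commit-adopt call of a round, and the $\ell<k$ refinement via the index bound of $k$-simultaneous consensus. The ``main obstacle'' you flag is exactly what the paper resolves with its Claim~2 (a counter reaching $c$ implies validation at all smaller counters, so no proposal can outrun the committed one), and your sketched ingredients match that argument.
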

\addtocounter{theorem}{-1}
}

\begin{proof}
We derive correctness of the simulation in Algorithm~\ref{Alg:K-conc} from
the following three properties: (1) every
simulated process follows a unique sequence of operations,
(2)~simulated snapshots and updates are linearizable, and (3)~at least one simulated process, with an
associated identifier lower or equal to the number of \emph{active}
processes, takes an infinite number of steps.

Let us first prove two useful simple claims:
\begin{enumerate}
\item \textbf{The write counter of a simulated process never
  decreases:} A write counter can only be modified  by incrementing it
  in line~\ref{Alg:Kc:WCincr} or by adopting it from the result of a
  commit-adopt operation in lines~\ref{Alg:Kc:1CA}
  or~\ref{Alg:Kc:OtherCA}. Moreover, the write counter and the
  associated value are only updated when the write counter obtained
  from a commit-adopt object is not smaller than the current one.
\item \textbf{If a process has a write counter equal to $c>0$, then at
  least one write/snapshot operation has been validated, i.e., a
  simulator passed the test in line~\ref{Alg:Kc:Commit}, for every
  write counter $c'$, $0\leq c'<n$:} Let us prove this property by
  induction. Initially, all write counters are set to $0$ and, thus,
  the property is trivially verified. Assume that the property holds
  for value $c$,
  and consider a state in which some process has a write counter equal to
  $c+1$. Consider the first time a process updates its write counter
  to $c+1$, it can only be the result of 
  line~\ref{Alg:Kc:WCincr}, as adopting the write counter from another
  process would result in a contradiction (this process must have updated it to $c+1$ first).
\end{enumerate}
Using these claims, let us show our three required properties:
\begin{itemize}
\item \textbf{For every couple $(s,c)$, where $s$ is a simulated
  process and $c$ is a write
  counter, all validated writes are identical.} According to the
  structure of 
  Algorithm~\ref{Alg:K-conc}, an operation is
  \emph{validated} if and only if it is returned from a call to a
  commit-adopt object with a \emph{commit} flag
  (line~\ref{Alg:Kc:Commit}). Consider the smallest round $r$ in which
  some process obtains a \emph{commit} flag with a write counter equal to $c$
  from a commit-adopt object associated to $s$. According to the
  specification of commit-adopt, every process obtains the same
  output value at round $r$ from this commit-adopt object, $(c,val)$, possibly with
  an \emph{adopt} flag. According to \textbf{Claim~2}, if no process
  \emph{validated} a write for $(s,c)$, then no write counter can be
  greater than $c$. Thus, every process replaces its current proposal
  value with the same couple $(c,val)$. 

By contradiction, assume that a process \emph{validates} a different write value for
$(s,c)$. By \textbf{Claim~1}, this process must have modified
its snapshot estimate without ever increasing its write
counter. The first process that modified the estimate could only do it
by adopting the proposition from another process, but no other process
changed yet its proposal value, which results in a contradiction.

\item \textbf{Simulated snapshots and updates are linearizable.} A
  simulated snapshot is the result of applying the auxiliary function
  $\textit{CurWrites}$ (lines~\ref{Alg:Kc:CurWrites1}--\ref{Alg:Kc:CurWrites2})
  on the snapshot result of $MEM$. The auxiliary function simply
  selects, for each simulated process, the write value with the
  greater write counter. By the previous property, all \emph{validated}
  writes updated to $MEM$ with the same counter are identical, and
  so, only the first completed write value may change the result of
  the simulated snapshot computation. Indeed, a \emph{validated} write
  can only be overwritten by a more recent write value, i.e., a write
  value associated with a greater write counter, as the write counter
  strictly increases during a \emph{validation} and can never decrease afterwards (\textbf{Claim~1}).

Therefore, for a given simulated process, the first write updating
$MEM$ for a given write counter will be selected in any later snapshot
until a write associated to a larger write counter is performed. 
Now we select the linearization point of a simulated write to be the
linearization point of the first
validated update (line~\ref{Alg:Kc:Write}) performed with it.
A simulated snapshot is
linearized to the linearization point of the corresponding
validated snapshot operation on $MEM$. 
The simulated snapshots and updates ordered by their
linearization points constitute a legal sequential history.
It is easy to see that a
validated snapshot operation, the one which served for the
simulated write computation, is linearized after the preceding
write operations of the simulated process. Indeed, a simulated snapshot is
computed only by the processes which \emph{validated} the preceding
write (line~\ref{Alg:Kc:Mainselect}) and only after the update was
made to $MEM$ (on line~\ref{Alg:Kc:Write}), thus, always after the
first update was made for the associated write counter.
\end{itemize}
Finally, we prove liveness of the simulation:

\begin{itemize}
\item \textbf{A simulated process with an identifier
  smaller or equal to the number of active processes takes infinitely 
  number of steps.} This result directly follows from the properties
  of $k$-simultaneous consensus objects. With the number of
  \emph{active} processes equal to $m$, processes are provided with an
  output value $(i,val)$ where $i<min(m,k)$ and every process with the
  same index obtains the same value. As processes first access the
  commit-adopt object associated with the obtained index $i$, some
  process must obtain a \emph{commit} flag for its index. Indeed, the
  first process to obtain an output for its first commit-adopt
  object in a round, may only have witness other processes with the
  same proposal. As a commit-adopt object must return a \textit{commit}
  flag when every proposed values are identical, this first process
  must obtain a \textit{commit} flag associated to its index. Thus, this
  process \emph{validates} the corresponding write value and increases its write counter. 

By \textbf{Claim~1}, write counters can never
decrease. Also, in every round, at least one simulator increases the
write counter associated with a simulated process with an identifier
smaller or equal to the number of \emph{active} processes,
$m$. Therefore, after an infinite number of rounds, the write counters
associated with such a simulated process have been incremented an
infinite number of times.  But there
are only finitely many of them, one per process and simulated process.
Thus, the write counter of one of them is incremented an infinite number of times. Using
\textbf{Claim~2}, an infinitely incremented write counter implies a
simulated process taking an infinite number  of simulation steps. 
\end{itemize}
\end{proof}

\remove{
{
\renewcommand{\thetheorem}{\ref{thm:kSet<->kConc}}
\begin{theorem}
The $k$-concurrency model and the $k$-set-consensus model are equivalent regarding task solvability.
\end{theorem}
\addtocounter{theorem}{-1}
}

\begin{proof}
According to Lemma~\ref{lem:kSet->kConc}, all tasks solvable in the $k$-concurrency model are solvable in the $k$-set-consensus model. The reverse direction is trivial: in the $k$-concurrency model processes can simply use the shared memory to solve any number of $k$-set-agreements. Indeed, a process accessing a $k$-set-agreement object simply scans the shared memory to see if another process already wrote an output for the agreement object: if it finds such a process, it selects it, otherwise, it selects its input value and writes it to shared memory along with the agreement object identifier (without ever overwriting this value). 

It is clear that every process eventually decides a value proposed by some process. Moreover, every decided value has been decided by the process which proposed it. Thus the set of decided values is the set of values written by a process deciding its own input. Consider two distinct processes deciding on their own input values: they could not have observed the other process output value written in the memory or they would have selected it, thus they both started their agreement operation before the other one completed it, i.e., both performed a concurrent access to the agreement object. As there can be at most $k$ processes executed concurrently in the $k$-concurrency model, a call to an agreement object can produce at most $k$ distinct output values. Therefore any task solvable in the $k$-set-consensus model can be solved in the $k$-concurrency model by simulating calls to $k$-set-agreement objects.
\end{proof}
}

{
\renewcommand{\thetheorem}{\ref{lem:KconsSim}}
\begin{lemma}
In $\RK^*$, Algorithm~\ref{Alg:K-cons} provides a non-blocking simulation of any shared memory algorithm with access to $k$-set-agreement objects.
\end{lemma}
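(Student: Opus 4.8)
The plan is to verify the three properties a non-blocking simulation must satisfy (i.e.\ that the read-write part behaves like an atomic-snapshot memory, every simulated $k$-set-agreement object returns at most $k$ values, all previously proposed, and that the simulated execution never deadlocks while some simulated process is undecided). For the memory part I would simply import the correctness argument of the base simulation of~\cite{GR10-opodis}: the only differences are that the simulation here is driven by the $\IStwo$ outputs of successive iterations of $\RK$ --- which form a subset of ordinary $\IStwo$ outputs, so every structural property used in~\cite{GR10-opodis} is still available --- and that each simulated process is padded with \emph{dummy} writes that re-write the previous value. As in~\cite{GR10-opodis}, write counters never decrease, the test $\sum_m \textit{WriteCount}_i[m] = r_i$ at line~\ref{Alg:K-cons:l:testCount} certifies that the writes of the currently estimated snapshot have been observed by every process, and linearizing each simulated write at the first round some process validated it, and each simulated snapshot at the round its test passed, yields a legal atomic-snapshot history; the dummy writes are harmless because they leave the simulated memory state unchanged.

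Before handling the agreement objects I would first derive, from the bound on contention sets defining $\RK$, the ``leader'' property announced in Section~\ref{sec:Rk}: in any $\sigma\in\RK$ the leaders --- the processes whose $\ISone$ view has size at most $k$ --- form a union of a prefix of the round-1 concurrency classes of total size at most $k$, so there are at most $k$ of them; and the set $T$ of processes whose $\IStwo$ view is inclusion-minimal in $\sigma$ is a contention set, hence $|T|\le k$, and the contention bound forces $T$ to contain a leader $\ell$; since $\ell$ has the minimal $\IStwo$ view, it is seen --- and recognised as a leader --- by every process in $\sigma$. For a fixed simulated $k$-set-agreement object $C$, validity follows from an easy induction: every pair $(C,v)$ ever inserted into some $ConsHistory_i$ has $v$ proposed to $C$, since an entry is created only at line~\ref{Alg:K-cons:l:AdoptFirst} with the process's own proposal (and only while no entry for $C$ is held) or by copying an existing entry from a leader at line~\ref{Alg:K-cons:l:AdoptLeader}. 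For agreement, let $r_0$ be the first round at which some process decides $C$. That process had $\textit{Leaders}$ still set, so each of the (at most $k$) leaders it observed carried a $C$-estimate in its $\RK$ input; in particular the minimal-view leader $\ell$ of round $r_0$ carried one, and since every process sees and recognises $\ell$, every process copies $\ell$'s $C$-estimate at round $r_0$. Together with the guard at line~\ref{Alg:K-cons:l:PreventingAdopt} --- which stops a process already holding a $C$-estimate from replacing it by a fresh proposal --- this shows that from round $r_0$ on every $C$-estimate held anywhere lies in the set $E$ of $C$-estimates carried by the round-$r_0$ leaders, with $|E|\le k$; hence the decided values form a subset of $E$ of size at most $k$.

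The liveness argument is the main obstacle, because the memory simulation favours ``fast'' (minimal-$\IStwo$-view) processes while the agreement simulation only progresses once the leaders have already propagated their estimates. The padding with dummy writes makes the memory simulation strictly non-blocking, so by the progress property of~\cite{GR10-opodis} (applied on $\IStwo$ outputs) infinitely many memory operations, real or dummy, complete among the undecided processes, and as there are finitely many processes some undecided process $x$ completes infinitely many of them. Whenever $x$ passes the test at line~\ref{Alg:K-cons:l:testCount} with no pending agreement operation, the write it terminates is a \emph{real} operation and progress is made; so suppose for contradiction that from some round on every completion is a dummy write --- which, by construction, occurs only while the completing process is waiting for an agreement output. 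Using the leader property, a minimal-view leader is seen and recognised as a leader by everyone, so after its round every process has adopted its whole $ConsHistory$; hence if $x$ holds a pending $C$-estimate and is a minimal-view leader in some round $r$, then at round $r+1$ every process --- in particular every leader --- holds a $C$-estimate. By the pigeonhole principle some undecided $x$ is a minimal-view leader in infinitely many rounds; pick two of them, $r<r'$. If $x$'s pending object is still $C=\textit{ConsId}_x$ at round $r'$, then every leader $x$ sees at $r'$ carries a $C$-estimate, so $\textit{Leaders}_x$ holds and $x$ decides $C$ at line~\ref{Alg:K-cons:l:Decide} --- a real completion, contradiction; otherwise $x$'s pending object changed between $r$ and $r'$, which requires $x$ to have completed an agreement operation in the meantime, again a contradiction. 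Hence real operations keep completing as long as some process is undecided, so the simulation is non-blocking. The delicate points to get right are the precise form of the progress guarantee inherited from~\cite{GR10-opodis} in the $\IStwo$ setting, the bookkeeping of when values are written into versus read back from the $\RK$ input across consecutive rounds, and the derivation of the leader property from the contention-set bound; the rest is routine.
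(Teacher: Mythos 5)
Your decomposition is the same as the paper's: (1) safety of the read--write part inherited from~\cite{GR10-opodis}, (2) validity and $k$-agreement via a ``visible leader'' extracted from the contention-set bound, (3) liveness by showing that an undecided process that keeps passing the test at line~\ref{Alg:K-cons:l:testCount} while blocked on an agreement object is infinitely often a leader seen by everyone and therefore eventually decides. Parts (1), the validity/agreement arguments, and the overall liveness scheme match the paper's proof closely (the paper packages the memory part into five explicit claims about write counters rather than citing~\cite{GR10-opodis} wholesale, but the content is the same).

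The genuine gap is in your derivation of the leader property, which you yourself flag as delicate but then dispatch with ``the contention bound forces $T$ to contain a leader $\ell$.'' That $T$ (the processes with the inclusion-minimal $\IStwo$ view) is a contention set only gives $|T|\leq k$; it says nothing about the \emph{sizes of the $\ISone$ views} of the members of $T$, which is what being a leader means. A priori $T$ could be a single process whose round-one view contains $n>k$ processes, and ruling this out is exactly the combinatorial heart of the lemma. The paper's ``leader visibility'' argument takes a different route: assuming every $\ISone$ view inside $S_{min}$ owned by an undecided process has more than $k$ undecided members, it exhibits the set $P_{min}$ of undecided processes whose \emph{inputs} lie in such a view, observes that all of them see each other through $S_{min}$ (hence share a carrier and form a contention set), and gets a contradiction with $|P_{min}|>k$. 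You need an argument of this kind; the statement you assert does not follow from $|T|\leq k$ alone. A secondary, fixable omission in your liveness step: for your $x$ to reach line~\ref{Alg:K-cons:l:Decide} at round $r'$ it must first pass the guard at line~\ref{Alg:K-cons:l:testCount}, and you never connect ``$x$ is a minimal-view leader at $r'$'' to ``$x$ passes that test at $r'$''; the paper closes this by noting that a smaller $\RK$ output yields a smaller-or-equal write-counter sum, so whenever any undecided process passes the test the minimal-view ones do too.
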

\addtocounter{theorem}{-1}
}

\begin{proof}
  We prove correctness of Algorithm~\ref{Alg:K-cons} in three steps:
  (1)~The read-write simulation is safe, i.e., the write-snapshot operations
are atomic; (2)~the $k$-set-agreement algorithm is safe, i.e., processes decide on
at most $k$ distinct proposed values for the same agreement object;
and (3)~the simulation is non-blocking, i.e., there is a
non-terminated process which completes an infinite number of simulated operations. 
\begin{itemize}
\item \textbf{The read-write simulation is safe}: The structure of the
  simulation is taken from an analogous simulation
  in~\cite{GR10-opodis}, therefore, this part of the proof is also
  directly inspired from the one in~\cite{GR10-opodis}. 

In Algorithm~\ref{Alg:K-cons}, memory operations are reduced to a
single write/snapshot operation. It is easy to see that any read-write
algorithm can be executed in this way by re-writing the same value
again to discard a write or by ignoring all or part of the snapshot
result to discard read operations. Nevertheless, even if a single
write/snapshot operation is provided, it is not an \emph{immediate}
snapshot, i.e., the write and the snapshot operations cannot be
linearized together by \emph{batches}. We will show that the set of
new written values returned in some simulated snapshot during round $r$,
$Wproc(snap(r))$, and the set of processes returning this snapshot,
$Rproc(snap(r))$, can be linearized firstly according to the
associated round number and secondly, for operations in the same
round, by linearizing the write operations before the read operations.

\begin{itemize}
\item \emph{Claim 1:} \textbf{Write counters can only increase.}
  Indeed, a write counter can only be modified by adopting a strictly
  greater value from another process (on
  lines~\ref{Alg:K-cons:l:UpdateWriteMin}--\ref{Alg:K-cons:l:UpdateWriteMax})
  or directly  incremented~\ref{Alg:K-cons:l:Increment}.
\item \emph{Claim 2:} \textbf{The sum of the write counters of an
  undecided process is greater or equal to the round number.} Let us
  show this claim with a trivial induction on the round number. The
  property is true for the first round, $r=1$, as the write counter
  associated to the processes own $ids$ is initially set to $1$ while
  the rest is set to $0$. Assume the property is true at round
  $r$. Then either the sum result is strictly greater than the round
  number and thus the property is true for round $r+1$, as write
  counters can only increase (\textbf{Claim 1}), or otherwise the sum
  is equal to the round number. In the later case the equality test
  made on line~\ref{Alg:K-cons:l:testCount} is verified and if the
  process is still \emph{undecided} it does not pass the test on
  line~\ref{Alg:K-cons:l:Terminate} and thus it increments its own
  write counter on line~\ref{Alg:K-cons:l:Increment} and the sum
  becomes, and stays (\textbf{Claim 1}), greater or equal to $r+1$.
\item \emph{Claim 3:} \textbf{There is a unique write value associated
  to each process and write counter, except possibly during the
  execution of line~\ref{Alg:K-cons:l:UpdateWriteMax} and
  lines~\ref{Alg:K-cons:l:NewOpMin}--\ref{Alg:K-cons:l:NewOpMax}.}
  This is a simple observation that the adoption of a write value is
  always made with its associated write counter on a memory position
  associated to the corresponding process (see
  lines~\ref{Alg:K-cons:l:UpdateWriteMin}--\ref{Alg:K-cons:l:UpdateWriteMax}). Otherwise,
  a write value can only be modified on
  line~\ref{Alg:K-cons:l:NewOpMax} by selecting a new write value but
  this is done only once the associated write counter has been
  incremented on line~\ref{Alg:K-cons:l:Increment}, a write counter
  that cannot have been used before, as write counters only increase
  (\textbf{Claim 1}).
\item \emph{Claim 4:} \textbf{A unique snapshot result can be returned
  per round.} According to the containment property of snapshots
  operations, the \IStwo output results of a call to an iteration of
  $\RK$ can be fully ordered by inclusion, i.e., $S_1\subset
  ... \subset S_m$. As the greatest write counter observed in these
  sets are adopted (see
  lines~\ref{Alg:K-cons:l:UpdateWriteMin}--\ref{Alg:K-cons:l:UpdateWriteMax}),
  a process with a larger \IStwo output obtains larger or equal write
  counters. Therefore the sum of the write counters of a process with
  a larger \IStwo output is equal to the sum of the write counters of
  a process with a smaller \IStwo output if and only if the write
  counters are equal for every corresponding process. But if processes
  obtain the same sum of write counters, the set of selected write
  values are identical (\textbf{Claim 3}). As in any round, a snapshot
  is returned (at line~\ref{Alg:K-cons:l:ValidateWrite}) only if the
  sum of write counters equals the round number (test on
  line~\ref{Alg:K-cons:l:testCount}), and so all returned snapshot are
  identical,  if any.
\item \emph{Claim 5:} \textbf{A write value can only be replaced with
  a more recent value.} This is a direct corollary of \textbf{Claims 1
  and 3}. Indeed, a write value associated with a given process can
  only replaced by a write value associated to the same process but
  with a greater associated write counter. Yet,	 a more recent write
  value is always associated with a greater write counter than all
  previously used ones for the corresponding process.
\end{itemize}

We therefore are provided with a unique sequence of validated
snapshots according to rounds, $snap(0)$, \dots, $snap(r)$,
$snap(r+1)$, \dots (\textbf{Claim 4}). We can thus define without
ambiguity $Wproc(snap(r))$ and $Rproc(snap(r))$, respectively the set
of firstly observed write values and the set of processes which
returned $snap(r)$ during round $r$. Moreover, according to
\textbf{Claim 5}, a write value can only be replaced by a more recent
write value, thus a snapshot returned on a later round can only
contain identical or more recent write values. It is obvious that
write operations can be safely linearized between the last snapshot
which observed an older write value and the first snapshot to return
the write value or a more recent one (as processes alternate write and
snapshot operations write operations can be linearized in any order in
between two snapshots, i.e., there can be at most one write value per
process between two consecutively linearized snapshots). Therefore the
following is a valid linearization ordering (we leave out the trivial
verification that the linearization order respects operations local ordering):

\begin{small}\[
\forall (r,r'),r<r', Lin(Rproc(snap(r)))<Lin(Wproc(snap(r')))<Lin(Rproc(snap(r'))){}.
\]
\end{small}

\item \textbf{The $k$-set-agreement is safe:} The safety of an agreement operation relies on two properties, \emph{validity} and \emph{agreement}. We also show an intermediary result later re-used for the liveness property, \emph{leader visibility}:
\begin{itemize}
\item \emph{Validity:} \textbf{Every decided value is the input proposal of some process:} 
The estimated decision value in $\textit{ConsHistory}_i$ is always either initialized to the process own proposal (on line~\ref{Alg:K-cons:l:AdoptFirst}), or to a value previously adopted from a \emph{leader} (on line~\ref{Alg:K-cons:l:AdoptLeader}). The estimated decisions values are then never dissociated from their corresponding $k$-set-agreement object identifier, $\textit{ConsId}_i$, in the local memory object $\textit{ConsHistory}_i$. It can further only be replaced in $\textit{ConsHistory}_i$ with the current initialized decision estimate, with the same object identifier, from another process (on line~\ref{Alg:K-cons:l:AdoptLeader}). Thus deciding on the estimated decision value in $\textit{ConsHistory}_i$ (at line~\ref{Alg:K-cons:l:Decide}) always decides on some process input proposal.
\item \emph{leader visibility:} \textbf{For every round, there is an undecided process, with an \ISone output containing at most $k$ inputs from undecided processes, contained in each undecided process $\RK$ output:} This property is directly derived from the definition of $\RK$ restricting the \emph{contention sets} sizes. Indeed, consider the smallest output obtained by an undecided process from an iteration of $\RK$, denoted $S_{min}$. Such a smallest view exists according to the snapshots \emph{containment} property. By the \emph{containment} property also, if $S_{min}$ contains an \ISone output with at most $k$ elements from undecided processes, then every $\RK$ output contains it. Moreover, by the \emph{self-inclusion} property, $S_{min}$ contains the \ISone output of an undecided process. 

Now let us assume that $S_{min}$ contains only \ISone outputs from undecided processes containing strictly more than $k$ inputs from undecided processes, and let $P_{min}$ be one of such a set of undecided processes with inputs contained in an \ISone output. As $S_{min}$ is the smallest output of an undecided process, all processes in $P_{min}$ observed in their $\RK$ output every $\RK$ input from processes in $P_{min}$. This implies that $P_{min}$ forms a \emph{contention set}, which is a contradiction with the assumption that $P_{min}$ includes more than $k$ processes.

\item \emph{Agreement:} \textbf{At most $k$ distinct values can be decided:} Consider the first round, $r$, at which some process completes an agreement operation for an agreement object (on line~\ref{Alg:K-cons:l:Decide}). A deciding process must have observed only round \emph{leaders}, i.e., processes with an \ISone output containing at most $k$ undecided processes inputs, with a decision estimate for the considered agreement object ($\textit{Leaders}_i=true$). Thus according to the \emph{leader visibility} property, there is a leader observed by every undecided process with a decision estimate for the considered agreement object. Therefore, during this round $r$, every process adopts the decision estimate of a \emph{leader} for the considered agreement object (at line~\ref{Alg:K-cons:l:AdoptLeader}). Thus every process has a decision estimate in its $ConsHistory$ object from one of the at most $k$ round \emph{leaders}. Moreover, even if processes start a call to this agreement object in a later round, they would discard their own input proposal value directly (see line~\ref{Alg:K-cons:l:PreventingAdopt}), and therefore the set of decision estimates can only decrease in later rounds.

It is easy to see by combining the \emph{self-inclusion} and \emph{containment} properties that at most $k$ processes can obtain a snapshot output containing at most $k$ inputs. Thus there can be at most $k$ \emph{leaders} in a round, therefore, the set of decided values being bounded by the number of decision estimates adopted at round $r$, at most $k$ distinct values can be decided for a given agreement object.
\end{itemize} 
\item \textbf{The simulation is non-blocking:} A non-blocking simulation means that if \emph{undecided} processes complete an infinite number of steps, then there is an \emph{undecided} process which completes an infinite number of simulation operations. Showing this property is equivalent to show that there is no reachable configuration where there are \emph{undecided} processes never completing any simulation steps during infinitely many algorithm rounds (the algorithm does not include waiting statements or infinite loops). 

Let us assume we are in such a state. We will first show that the
write counter of an \emph{undecided} process is increased infinitely
many times. This simply results from the previously proven
\textbf{Claim 2} from the safety proof of the read write simulation, stating that the sum of write counters of an undecided process is always greater than the round number. Thus, as there is an undecided process completing an infinite number of simulation rounds, there is a write counter increased infinitely many often. This write counter can only be one of an \emph{undecided} process as a write counter is never increased after termination (see lines~\ref{Alg:K-cons:l:Terminate} and~\ref{Alg:K-cons:l:Increment}). Thus there is a process passing the test on line~\ref{Alg:K-cons:l:testCount} infinitely often.

If a process passes the test on line~\ref{Alg:K-cons:l:testCount}, then every \emph{undecided} process with a smaller $\RK$ output, thus a smaller write counter sum, must pass it as well. Let us consider the \emph{undecided} process, $p_{min}$, which passes this test infinitely often with the smallest $\RK$ output obtained by an \emph{undecided} process. This process must have a pending active agreement operation as otherwise it would complete its write operation, or \emph{terminate}, or select a new operation. Without loss of generality, we can take $p_{min}$ to be indefinitely often a \emph{leader} seen by every undecided process (see the \textbf{leader visibility} property). Therefore every process must eventually obtain a decision estimate for $p_{min}$ pending agreement object (at line~\ref{Alg:K-cons:l:AdoptLeader}), and thus $p_{min}$ eventually decides as it eventually only observe leaders with a decision estimate ($Leader=true$). This results in a contradiction, thus the simulation is non-blocking.
\end{itemize}
\end{proof}
\end{document}